\documentclass[11pt]{article}

\usepackage[margin=1in]{geometry}
\usepackage[T1]{fontenc}
\usepackage{lmodern}
\usepackage{microtype}
\usepackage{amsmath,amssymb,amsthm}
\usepackage{graphicx}
\usepackage{booktabs}
\usepackage{natbib}
\usepackage{url}
\usepackage[hidelinks]{hyperref}
\usepackage{caption}
\usepackage{array}
\usepackage{placeins}
\usepackage{flafter}

\setcitestyle{semicolon}
\graphicspath{{figures/}}
\newcommand{\E}{\mathbb{E}}
\newcommand{\Prob}{\mathbb{P}}

\newcommand{\Ind}{\mathbf{1}}

\newtheorem{theorem}{Theorem}
\newtheorem{lemma}{Lemma}
\newtheorem{proposition}{Proposition}
\newtheorem{assumption}{Assumption}
\theoremstyle{definition}

\hypersetup{
  pdftitle={Bayesian Robustness Values for Modern Causal Panel Estimators via Riesz Representations},
  pdfauthor={Makoto Nakakita and Takahiro Hoshino},
  pdfsubject={Sensitivity analysis for causal panel estimators}
}

\begin{document}

\title{Bayesian Robustness Values for Modern Causal Panel Estimators via Riesz Representations}
\author{Makoto Nakakita\\RIKEN\\\texttt{makoto.nakakita@riken.jp}\and Takahiro Hoshino\\Keio University and RIKEN}
\date{}
\maketitle
\vspace{-1.0em}

\begin{abstract}
We develop a sensitivity-analysis workflow for causal panel estimators, covering synthetic difference-in-differences, matrix completion, fixed-effect imputation, and group-time average treatment effects. The workflow combines Riesz-representation omitted-variable-bias bounds with partial-$R^2$ robustness values and separates two reporting routes. Route A gives a direct sensitivity profile for additive or projected confounding summarized by outcome-side and Riesz-side partial $R^2$ values. Route B treats observed-covariate benchmarks as auxiliary data only when benchmark-count, alpha-side alignment, model-check, dependence, and dominance diagnostics are credible; otherwise its main role is demotion. We derive estimator-specific Riesz diagnostics and clarify which are fixed-weight, target-level, or first-stage-conditional rather than full derivatives of regularized training maps. Monte Carlo stress tests distinguish calibrated benchmark settings from dominance failure, coarse alpha-side benchmarks, benchmark dependence, noisy covariates, and concentrated SDID weights. In the California tobacco-control panel, the SDID estimate is $-15.60$ packs per capita; corrected finite-donor placebo inference gives standard error 9.49 and add-one $p=0.051$. A refit-weight finite-difference audit changes the Route A nullification robustness value from 0.054 to 0.045, leaving the low-single-digit conclusion unchanged. A county-level minimum-wage application applies the same profile to a multi-cohort staggered panel.
\end{abstract}

\noindent{\it Keywords:} omitted variable bias; synthetic difference-in-differences; matrix completion; comparative case study; partial identification.

\newpage

\section{Introduction}
\label{sec:intro}

Causal panel estimators have become central tools in applied causal inference. Synthetic difference-in-differences (SDID) of \citet{ArkhangelskyEtAl2021} bridges synthetic control and difference-in-differences, delivering double-robustness against violations of either parallel trends or perfect synthetic-control match. Matrix completion (MC) of \citet{AtheyEtAl2021} imputes counterfactual outcomes via nuclear-norm regularization, exploiting low-rank structure in the panel. The group-time average treatment effect framework of \citet{CallawaySantAnna2021} and the imputation-based approach of \citet{BorusyakJaravelSpiess2024} provide flexible identification under staggered treatment adoption. These estimators are often motivated by latent-factor and interactive-fixed-effect failures of simple two-way fixed effects \citep{Bai2009,Xu2017}. This paper therefore sits at the intersection of modern DiD heterogeneity work \citep{DeChaisemartinDHaultfoeuille2020,GoodmanBacon2021,RothSantAnnaBilinskiPoe2023}, synthetic-control extensions \citep{DoudchenkoImbens2016,BenMichaelFellerRothstein2021}, and sensitivity analysis for observational causal inference \citep{RosenbaumRubin1983,ImbensWooldridge2009,Manski1990,VanderWeeleDing2017,FermanPinto2021}.

Despite these advances, sensitivity analysis for modern panel estimators remains uneven. SDID and matrix completion still lack an omitted-variable-bias-bound-based robustness-value analysis, while recent work on staggered $ATT(g,t)$ is frequentist rather than posterior predictive. Consider a concrete applied scenario. A researcher estimates the effect of California's tobacco-control program on cigarette consumption using SDID and reports $\widehat\tau\approx -15.6$ packs per capita, a large negative point estimate. A natural robustness question is whether this conclusion survives omitted state-level health-policy initiatives that correlate with both treatment timing and outcome trajectory. The researcher has no estimator-native answer. \citet{CinelliHazlett2020}'s OLS robustness-value framework does not extend directly to SDID; \citet{BachKlaassenKueckMattesSpindler2025}'s frequentist double-machine-learning extension covers $ATT(g,t)$ but not SDID; and \citet{LiuYamamoto2025}'s Bayesian framework is parameterized in latent-confounder space rather than in omitted-variable-bound and partial-$R^2$ robustness-value units.

This paper develops a sensitivity workflow for this setting. We combine three ingredients. First, the Riesz-representation omitted-variable-bias framework of \citet{ChernozhukovEtAl2026} provides a unified machinery for bounding omitted-variable bias of general causal estimands. Its use of fitted representers is connected to double/debiased machine learning and automatic Riesz-representer estimation \citep{ChernozhukovEtAl2018,ChernozhukovNeweySingh2022}. Second, the partial-$R^2$ robustness-value framework of \citet{CinelliHazlett2020,CinelliHazlett2025} provides a scale-free and interpretable benchmarking device. This scale connects to selection-on-observables calibration \citep{Imbens2003,Oster2019,Frank2000}, weighted-estimator sensitivity \citep{WainsteinHazlett2025}, relative-correlation and partial-identification approaches \citep{Krauth2016,MastenPoirier2018,MastenPoirierZhang2024}, endogenous-control sensitivity \citep{DiegertMastenPoirier2022}, and local-misspecification sensitivity \citep{BonhommeWeidner2022}. Third, Bayesian inference on the partial-$R^2$ sensitivity parameters provides a natural language for uncertainty quantification when those parameters are not themselves data-identifiable.

The third ingredient requires care. Priors placed directly on the sensitivity parameters are useful for transparent robustness profiling, which we call Route A, but they do not use observed covariate benchmarks as data. Our auxiliary benchmark formulation, Route B, treats observed-covariate partial-$R^2$ pairs as draws from a benchmark population and updates the resulting omitted-variable-bias bound only when diagnostic checks support that modeling step. When the benchmark set is too small, too coarse, poorly aligned with the estimator's Riesz representer, or unable to dominate the hidden-confounder strength distribution, Route B is demoted to an exploratory stress test and Route A remains the primary analysis.

Our contribution differs from the closest concurrent literatures in four ways. First, relative to cross-fitted DML work on $ATT(g,t)$, we add closed-form fixed-weight SDID and target-level MC Riesz diagnostics on the partial-$R^2$ omitted-variable-bias scale. The $ATT(g,t)$ representer is adapted from \citet{BachKlaassenKueckMattesSpindler2025}; the SDID, MC, and BJS diagnostics and the Route A/B reporting layer are the paper's main methodological additions. Second, relative to weighted partial-$R^2$ OVB work, our fixed-weight SDID diagnostic is the panel-contrast analogue of a weighted sensitivity calculation when weights are treated as fixed, but the fitted-weight feedback is not hidden: the refit audit reports both direction and decision-scale magnitude. Third, relative to latent-confounder Bayesian panel sensitivity, the route diagnostic can demote an observed-benchmark update when the benchmark population is too small, too coarse, alpha-side degenerate, or dependent in a way that leaves little effective information. Fourth, relative to parallel-trends-prior Bayesian DiD approaches such as \citet{HanMitraHettingerOganisian2025}, the output is a decision-scale translation of frequentist breakdown thresholds into Route A prior probabilities and, only when diagnostics pass, Route B posterior-predictive probabilities.

In applications, the minimal reporting set is the fitted effect, sampling uncertainty, Riesz scale, Route A nullification and significance robustness values, benchmark count, alpha-side support, auxiliary-model check, dominance rationale, and route status.

The paper makes three contributions. First, it gives a Route A/Route B sensitivity workflow: Route A reports direct robustness-value profiles and prior-probability translations, while Route B is a diagnostic updating layer that uses observed-covariate benchmarks only after benchmark-count, alpha-side, dominance, dependence, and model-check diagnostics support calibration. Second, it derives estimator-specific Riesz diagnostics for SDID, matrix completion, fixed-effect imputation, and staggered $ATT(g,t)$, always stating whether the diagnostic is conditional on fitted weights, a target-level functional, or a first-stage fill-in operator. Third, it combines contraction and coverage theory, simulation stress tests, and two empirical studies: a single-treated-unit tobacco-control panel and a county-level staggered-adoption panel. The empirical studies show how the route diagnostics prevent observed-covariate benchmarks from being interpreted as calibrated posterior evidence when the benchmark population is not aligned with the estimator-specific Riesz representer.

Sections \ref{sec:framework}--\ref{sec:bayes} set up the omitted-variable-bias bound and the Bayesian routes. Section \ref{sec:riesz} gives the panel-estimator Riesz diagnostics. Section \ref{sec:asymptotic} gives theory and operating characteristics. Sections \ref{sec:application}--\ref{sec:staggered-app} give the two empirical studies. Section \ref{sec:discussion} gives reporting recommendations and limitations.

\section{Omitted-Variable-Bias Bounds via Riesz Representation}
\label{sec:framework}

\subsection{Setup and Riesz-Representation Bound}
\label{sec:framework:bound}

Let $W=(Y,D,X,U)$ denote the data, where $Y$ is the outcome, $D$ the treatment indicator, $X$ a vector of observed pretreatment covariates, and $U$ an unobserved confounder. The causal estimand $\theta_{\mathrm{long}}$ is identifiable under the long regression specification
\begin{equation}
\theta_{\mathrm{long}}=\E[m(W;g_{\mathrm{long}})],\qquad
 g_{\mathrm{long}}(D,X,U)=\E[Y\mid D,X,U],
\label{eq:long}
\end{equation}
where $m$ is a linear functional of the conditional expectation $g_{\mathrm{long}}$. The short specification accessible to the researcher replaces $g_{\mathrm{long}}$ by
\begin{equation}
 g_{\mathrm{short}}(D,X)=\E[Y\mid D,X].
\label{eq:short}
\end{equation}
The omitted-variable-biased estimand $\theta_{\mathrm{short}}$ generally differs from $\theta_{\mathrm{long}}$ by an amount that depends on the strength of $U$'s correlation with $Y$ and with the Riesz representer of the functional $m$ at $g_{\mathrm{short}}$.

The fundamental bound is
\begin{equation}
\left|\theta_{\mathrm{short}}-\theta_{\mathrm{long}}\right|^2
\leq
\E\!\left[\alpha^2(W)\sigma^2_{Y\mid D,X}(W)\right]
\frac{R^2_{Y\sim U\mid D,X}\,R^2_{\alpha\sim U\mid X}}{1-R^2_{\alpha\sim U\mid X}},
\label{eq:ovb}
\end{equation}
where $\alpha(W)$ is the Riesz representer of $m$ at $g_{\mathrm{short}}$ under the empirical-distribution inner product and $\sigma^2_{Y\mid D,X}$ is the residual variance of $Y$ given $(D,X)$. We write $R^2_{yu}$ and $R^2_{\alpha u}$ for the two partial $R^2$ terms and abbreviate
\[
M=\left\{\E[\alpha^2(W)\sigma^2_{Y\mid D,X}(W)]\right\}^{1/2}.
\]
The bound factorizes the source of omitted-variable bias into an intrinsic scale $M$ depending on the estimator and the data, and a sensitivity term governed by the two partial $R^2$ values of the confounder.

\subsection{Partial-\texorpdfstring{$R^2$}{R2} Interpretation and Robustness Values}
\label{sec:rv-scope}
\label{sec:rv}

The bound targets omitted components that can be summarized by their partial association with the outcome residual and with the Riesz representer. This class includes scalar or vector covariates $U$ after residualization, and it also includes projected latent components such as an interactive-factor term $U_{it}=\lambda_i'f_t$ when the researcher is willing to summarize the remaining unbalanced factor component by $(R^2_{yu},R^2_{\alpha u})$. Arbitrary unbalanced interactive fixed effects can change the counterfactual surface in ways that require a design-level factor argument or an explicit projection into this partial-$R^2$ confounding class.

This distinction matters for interpretation. A low nullification RV in the California tobacco-control study means that a small equal-strength partial-$R^2$ confounder, measured in the outcome-residual and fitted-Riesz directions, can move the fitted contrast. Its scope is the additive or projected omitted-component class summarized by the two partial-$R^2$ coordinates. Latent-factor imbalance, SUTVA violations, spillovers, and misspecified treatment timing require a design-level argument or a projection of the violation into that class. The point of the workflow is therefore to state the target confounding class, report the Riesz scale, and then use route diagnostics to avoid turning observed-covariate benchmarks into automatic posterior validation.

We work with the Riesz-scaled effect size
\begin{equation}
K=\frac{|\theta_{\mathrm{short}}|}{M},\qquad
M=\left\{\E[\alpha^2(W)\sigma^2_{Y\mid D,X}(W)]\right\}^{1/2}.
\label{eq:K}
\end{equation}
For inference, let $c_t$ denote the critical value of the estimator's sampling distribution, typically $z_{1-\alpha/2}=1.96$. The corresponding critical value on the $K$ scale is
\begin{equation}
c=c_t\,(SE/M).
\label{eq:c}
\end{equation}
For OLS, $SE/M=1/\sqrt{df}$, recovering the familiar Cinelli-Hazlett scaling. For SDID and MC, $M$ is computed from the Riesz representers of Section \ref{sec:riesz}, so the formula does not require residual degrees of freedom for a regularized estimator.

The robustness value $RV$ is the smallest equal-strength partial-$R^2$ value $r=R^2_{yu}=R^2_{\alpha u}\in[0,1]$ at which the OVB-adjusted effect size hits the threshold $q c$, where $q=1$ reaches the boundary of conventional significance and $q=0$ nullifies the point estimate. Under equal strength, the squared scaled bound is $r^2/(1-r)$. Solving $r^2/(1-r)=\widetilde K^2$ with $\widetilde K=\max(K-qc,0)$ gives
\begin{equation}
RV(K,c;q)=\frac{1}{2}\left\{\sqrt{\widetilde K^4+4\widetilde K^2}-\widetilde K^2\right\},
\qquad \widetilde K=\max(K-qc,0).
\label{eq:rv}
\end{equation}
The interpretation is invariant across estimators: an unobserved confounder must explain at least an $RV$ fraction of residual variance in both $Y$ and the Riesz representer $\alpha$ to alter the conclusion at the chosen benchmark.

\medskip
\noindent\emph{Connection to partial identification.} For any fixed equal-strength confounding scale $r$, the OVB bound defines a partial-identification set around the short-regression estimand, up to the sampling component used for inference. The nullification RV is the smallest $r$ at which zero enters this set; the significance RV is the corresponding threshold for a sampling-adjusted set. Route A therefore does not identify the long-regression effect. It places a transparent probability model over the radius of a partial-identification set, while Route B updates that radius only when the observed benchmark population is credible.

\section{Bayesian Reporting: Routes A and B}
\label{sec:bayes}

We develop two routes for placing a Bayesian framework on top of the Riesz-representation OVB machinery. Route A places direct priors on $(R^2_{yu},R^2_{\alpha u})$ without using observed-covariate information. Route B places a model on the population distribution of covariate strengths and updates via observed-covariate benchmarks. Route A is the default reporting baseline because it is always defined; Route B is reported as calibrated only after benchmark diagnostics support the auxiliary modeling step.

\noindent\emph{Route A direct prior profile.} Route A places Beta priors
\begin{equation}
R^2_{yu}\sim \mathrm{Beta}(a_{yu},b_{yu}),\qquad
R^2_{\alpha u}\sim \mathrm{Beta}(a_{\alpha u},b_{\alpha u}).
\label{eq:beta-priors}
\end{equation}
The prior parameters are set by the user. Observed covariates can still help choose a prior scale, for example through the heuristic that an unobserved confounder should be no stronger than the strongest observed covariate, but Route A should be interpreted as a prior-predictive sensitivity profile rather than a data-identified posterior analysis.

\medskip
\noindent\emph{Observed-covariate calibration and Route B.}\label{sec:bayes:routeB} For each observed covariate $X_j$, compute a bivariate partial-$R^2$ benchmark
\[
 m_{\mathrm{obs}}^j=(R^2_{Y,X_j\mid rest},R^2_{\alpha,X_j\mid rest})
\]
using the most granular covariate representation available. We use one pessimism map throughout. For $r\in(0,1)$ and $\kappa\geq1$, define the odds-scale multiplier
\begin{equation}
g_\kappa(r)=\operatorname{logit}^{-1}\{\operatorname{logit}(r)+\log\kappa\}
=\frac{\kappa r}{1-r+\kappa r}.
\label{eq:odds-link}
\end{equation}
A Route A calibration sets the prior means to
\begin{equation}
\E(R^2_{yu})=g_\kappa\!\left(\max_j R^2_{Y,X_j}\right),\qquad
\E(R^2_{\alpha u})=g_\kappa\!\left(\max_j R^2_{\alpha,X_j}\right),
\label{eq:kappa-cal}
\end{equation}
and fixes the Beta concentration after this mean calibration. We use concentration 20 as the reference value and report concentration sensitivity.

Measurement error has two distinct effects. Attenuated benchmark strengths can make a conditional Route B update look too robust, whereas the route gate works in the opposite direction by demoting weak observed alpha-side support. The full-pipeline experiment regenerates noisy covariates, residualizes them, and recomputes the partial-$R^2$ pairs. When reliability falls from 1.0 to 0.5, the median observed-to-latent alpha benchmark ratio falls to 0.812, the alpha-support gate rate falls from 0.963 to 0.819, and conditional coverage among promoted replications falls from 0.983 to 0.945. These results quantify both attenuation and the protective, but incomplete, response of the gate.

Route B turns the benchmarking heuristic into an auxiliary population model. Let $T_0(\theta)$ denote a mean, tail-quantile, or upper-support functional of the benchmark population. The same odds multiplier is applied coordinatewise:
\begin{equation}
T(\theta;\kappa)=\min\{g_\kappa(T_0(\theta)),1-\eta_{\mathrm{cap}}\},
\label{eq:link}
\end{equation}
where the minimum is coordinatewise and $\eta_{\mathrm{cap}}>0$ keeps the OVB denominator away from zero. Independent Beta marginals provide the baseline working family, with logit-normal and Gaussian-copula alternatives used as sensitivity checks.

Observed benchmark pairs share outcomes, residualization steps, and often correlated covariates, so a literal product likelihood can overstate information. We therefore treat the auxiliary update as a generalized or power posterior in the sense of generalized Bayes and robust/coarsened Bayes \citep{BissiriHolmesWalker2016,GrunwaldVanOmmen2017,MillerDunson2019}. The tempered working likelihood is
\begin{equation}
L_c(\theta;m^1_{\mathrm{obs}},\ldots,m^p_{\mathrm{obs}})
=\left\{\prod_{j=1}^p f(m^j_{\mathrm{obs}};\theta)\right\}^{\omega_p},
\qquad \omega_p=\frac{p_{\mathrm{eff}}}{p}.
\label{eq:auxlik}
\end{equation}
Here $p_{\mathrm{eff}}\leq p$ is an effective benchmark count, so the composite score has information of order $p_{\mathrm{eff}}$ rather than $p$. In an equicorrelation benchmark-score model with common correlation $\rho_b$, $p_{\mathrm{eff}}=p/\{1+(p-1)\rho_b\}$ follows from the variance of the average score. In non-equicorrelated designs we use the same mean-score formula, $p_{\mathrm{eff}}=p^2/({\bf 1}'\widehat R{\bf 1})$, where $\widehat R$ is the empirical benchmark-score correlation matrix after residualization; a spectral participation-ratio diagnostic, $(\operatorname{tr}\widehat R)^2/\operatorname{tr}(\widehat R^2)$, is reported as a secondary dimension-count check. These adjustments are modeling choices, not identification assumptions. If dependence leaves $p_{\mathrm{eff}}$ bounded, Route B does not contract as more redundant benchmarks are added. Updating with $L_c$ gives $\pi(\theta\mid m_{\mathrm{obs}})$, and the posterior of $(R^2_{yu},R^2_{\alpha u})$ is the pushforward through $T(\cdot;\kappa)$.

\medskip
\noindent\emph{Posterior summaries and route choice.} For each posterior draw $(R^{2,(s)}_{yu},R^{2,(s)}_{\alpha u})$, compute the OVB bound
\begin{equation}
B^{(s)}=M\left\{\frac{R^{2,(s)}_{yu} R^{2,(s)}_{\alpha u}}{1-R^{2,(s)}_{\alpha u}}\right\}^{1/2},
\label{eq:post-bound}
\end{equation}
and the worst-direction adjusted estimate
\begin{equation}
\widehat\theta^{(s)}_{adj}=\mathrm{sign}(\widehat\theta)\max\{|\widehat\theta|-B^{(s)},0\}.
\label{eq:adj-est}
\end{equation}
The posterior distribution of the robustness value is obtained by applying \eqref{eq:rv} to each draw of the adjusted Riesz-scaled effect size.

Route choice is a reporting rule, not an estimator-selection rule. Route A should always be reported. Route B is promoted from exploratory to calibrated only when the effective number of benchmark covariates is large enough, alpha-side benchmarks are nondegenerate and computed at the same observational level as the Riesz representer, posterior-predictive checks do not reject the auxiliary benchmark model, and the predictive distribution is plausibly pessimistic enough to dominate the hidden-confounder strength. In the simulations and applications, rejection of the marginal benchmark-model check at the 10\% level is used as the default M1 warning threshold; such a warning demotes Route B unless a design-specific justification is stated. A larger $\kappa$ is a sensitivity display rather than a calibration repair when the benchmark model is not supported.

\medskip
\noindent\emph{Bayesian interpretation.} Route A probabilities are prior probabilities over sensitivity parameters, not posterior probabilities about an identified causal effect. Route B probabilities are posterior-predictive probabilities under an auxiliary benchmark-population model. They are reported as calibrated only when benchmark count, alpha-side alignment, model-check diagnostics, and predictive-dominance plausibility all pass. If any diagnostic fails, Route B remains an exploratory stress profile and Route A is primary.

\section{Riesz Diagnostics for Causal Panel Estimators}
\label{sec:riesz}

Before giving formulas, we fix the scope of each diagnostic. Each diagnostic is conditional on a first-stage object. The SDID diagnostic is the fitted-contrast Riesz vector evaluated at the estimated unit and time weights. The matrix-completion diagnostic is the treated-cell target vector for a fixed treated-cell index set. The fixed-effect-imputation diagnostic is the imputation-operator Riesz vector conditional on the untreated-cell two-way projection. The group-time diagnostic is the fitted score contrast given the nuisance scores and comparison sets. These objects put several fitted causal-panel contrasts on a common omitted-variable-bias scale while keeping the first-stage convention visible.

\subsection{Estimator-Specific Representers}

\noindent\emph{Fixed-weight SDID Riesz diagnostic.} The SDID estimator \citep{ArkhangelskyEtAl2021} computes an ATT using unit weights $\widehat\omega_i$ and time weights $\widehat\lambda_t$. Let $g(i)\in\{C,T\}$ denote treatment-group membership and $p(t)\in\{pre,post\}$ denote period status. Conditional on the estimated weights being held fixed, the fitted SDID contrast is linear in the outcome surface.

\begin{proposition}[Fixed-weight SDID contrast representer]
Conditional on the estimated unit and time weights being held fixed, the Riesz representer for the fitted SDID contrast under the empirical-distribution inner product is
\begin{equation}
\alpha_{\mathrm{SDID}}(i,t)=NT\,\widehat\omega_i\widehat\lambda_t\frac{\sigma_{g(i),p(t)}}{W_{g(i),p(t)}},
\label{eq:sdid-rep}
\end{equation}
where $\sigma_{g,p}=+1$ for $(T,post)$ and $(C,pre)$ and $-1$ for $(C,post)$ and $(T,pre)$, and $W_{g,p}=(\sum_{i\in g}\widehat\omega_i)(\sum_{t\in p}\widehat\lambda_t)$.
\end{proposition}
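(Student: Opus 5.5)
The plan is to write the fitted SDID contrast, with the weights held fixed, as an explicit linear functional of the outcome surface $\{Y_{it}\}$. We then read off its coefficients and rescale them to the empirical inner product $\langle f,h\rangle=(NT)^{-1}\sum_{i,t}f(i,t)h(i,t)$.

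First, we use the weighted double-difference form of SDID. With the unit and time weights normalized within each block, the fitted contrast is
\[
\widehat\tau=\sum_{g,p}\sigma_{g,p}\,\frac{\sum_{i\in g}\sum_{t\in p}\widehat\omega_i\widehat\lambda_t Y_{it}}{W_{g,p}}.
\]
This is the treated post-minus-pre weighted mean change minus the control post-minus-pre weighted mean change, which gives the sign pattern $\sigma_{g,p}$ stated in the proposition. The normalization by $W_{g,p}$ covers the usual conventions. Treated units typically carry uniform weights $1/N_{tr}$, post periods uniform weights $1/T_{post}$, and the control and pre-period weights sum to one within their blocks. In that case $W_{g,p}$ reduces to the product of block sums, and the formula is invariant to how each block's weights are scaled.

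Second, we note that conditional on $(\widehat\omega,\widehat\lambda)$ this map is linear in $Y$. Replacing $Y_{it}$ by any function $h(i,t)$ of the panel cell (in particular the regression surface $g$) gives the functional $m(h)=\sum_{i,t}c_{it}h(i,t)$, with $c_{it}=\widehat\omega_i\widehat\lambda_t\sigma_{g(i),p(t)}/W_{g(i),p(t)}$. Since $m(h)=\langle \alpha,h\rangle$ with $\alpha=NT\,c$, Riesz representation under the empirical measure gives exactly \eqref{eq:sdid-rep}. On the finite cell space the representer is unique, because the inner product is nondegenerate.

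The only delicate step is the first one: making sure the SDID estimator, which is originally defined as a weighted two-way fixed-effects regression, equals this weighted double difference. This is where I expect the main obstacle. I would invoke the standard equivalence from \citet{ArkhangelskyEtAl2021}: with product weights $\widehat\omega_i\widehat\lambda_t$ on a $2\times2$ block design, the weighted TWFE coefficient on $D_{it}$ equals the weighted DiD of block means. This follows from the first-order conditions of the weighted least-squares problem with unit and time effects, which force the residuals to have zero weighted mean in each row and column block. I would verify it directly by solving those normal equations on the four blocks. Everything else is bookkeeping, and the qualifier ``fixed weights'' is exactly what removes the dependence of $\widehat\omega,\widehat\lambda$ on $Y$.
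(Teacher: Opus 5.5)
Your proposal is correct and follows the paper's own argument: write the fixed-weight SDID contrast as the block-normalized weighted double difference, read off the cell coefficients $c_{it}$, and multiply by $NT$ to pass to the empirical inner product. The one addition is your check that the weighted two-way fixed-effects coefficient equals this double difference, which the paper simply takes as its starting point. That check works: the normal equations force the weighted residual sums to vanish on all four blocks, and the product form $\widehat\omega_i\widehat\lambda_t$ makes the double difference annihilate additive unit and time effects.
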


The proof is immediate: apply the weighted double-difference to a generic outcome surface and match coefficients under the empirical cell inner product. The $NT$ prefactor cancels the empirical-mass normalization. Synthetic panels and the California tobacco-control panel verify the identity to machine precision.

If the weights were externally fixed, this diagnostic would be the panel-contrast analogue of a weighted partial-$R^2$ omitted-variable-bias calculation such as \citet{WainsteinHazlett2025}. The difference is that SDID weights are estimated from the same outcome panel. We therefore interpret (\ref{eq:sdid-rep}) as a fitted-contrast diagnostic and explicitly report a refit finite-difference audit; the audit is the device that quantifies the first-stage weight feedback omitted by the fixed-weight representation. A fully orthogonalized SDID sensitivity diagnostic is conceptually possible, in the spirit of orthogonal-score DiD and DML constructions \citep{ChernozhukovEtAl2018,SantAnnaZhao2020,BachKlaassenKueckMattesSpindler2025}. We do not use it as the baseline because the two-sided simplex constraints in SDID weights can generate nonstandard corner-solution asymptotics; the fixed-weight diagnostic plus refit audit is therefore a deliberate choice for interpretability and transparent measurement of weight feedback.

\noindent\emph{Matrix-completion target-level diagnostic.} The MC estimator imputes the counterfactual outcome surface for treated cells via nuclear-norm regularization. The diagnostic isolates the treated-cell counterfactual-mean target, with the nuclear-norm training map treated as the first-stage procedure that defines the imputed surface.

\begin{proposition}[Matrix-completion target-level representer]
For the treated-cell index set $\mathcal M$, the Riesz representer for the counterfactual-mean target underlying the MC ATT is
\begin{equation}
\alpha_{\mathrm{MC}}(i,t)=\frac{NT}{|\mathcal M|}\Ind\{(i,t)\in\mathcal M\}.
\label{eq:mc-rep}
\end{equation}
\end{proposition}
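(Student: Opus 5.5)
The plan is to mirror the argument given for the fixed-weight SDID representer. First I would write down the target functional precisely. The empirical-distribution inner product on the $N\times T$ panel is $\langle f,h\rangle=(NT)^{-1}\sum_{i,t}f(i,t)h(i,t)$. The MC ATT splits into the observed treated mean minus the imputed counterfactual mean. The omitted-variable-bias question only concerns the counterfactual surface. So I would take the target functional to be the counterfactual mean over treated cells,
\[
m(g)=\frac{1}{|\mathcal M|}\sum_{(i,t)\in\mathcal M} g(i,t),
\]
where $g$ is a generic counterfactual (untreated-potential-outcome) surface. At this level the nuclear-norm training map is treated as the first-stage procedure that produces $g$, as the text preceding the statement specifies. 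This functional is linear in $g$ and bounded on the finite-dimensional space $\mathbb R^{N\times T}$. By the Riesz representation theorem, a unique representer therefore exists.

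Second, I would identify the representer by matching coefficients. I need $\alpha$ with $m(g)=\langle \alpha,g\rangle$ for every $g$. Take $g=e_{(i,t)}$, the indicator of a single cell. This gives $m(e_{(i,t)})=|\mathcal M|^{-1}\Ind\{(i,t)\in\mathcal M\}$ and $\langle\alpha,e_{(i,t)}\rangle=(NT)^{-1}\alpha(i,t)$. Equating the two yields \eqref{eq:mc-rep}. As in the SDID case, the $NT$ prefactor exactly cancels the empirical-mass normalization. Uniqueness follows because the cell indicators span the space. As a sanity check, $\E_n[\alpha^2]=NT/|\mathcal M|$, so $M^2$ reduces to a treated-cell average of residual variance scaled by $NT/|\mathcal M|$. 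This matches the familiar $1/\text{(treated count)}$ scaling of a mean.

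The main obstacle is not the algebra but the scope of the functional. The MC estimate of the counterfactual at treated cells is a nonlinear function of the untreated cells, through soft-thresholded singular values. A representer for the full estimator map would therefore require differentiating the training map. I would handle this by stating explicitly that the proposition concerns the target-level functional evaluated on the counterfactual surface, not the composition with the regularized training map. This is consistent with the "target-level" label used throughout Section \ref{sec:riesz}. If a fuller statement were wanted, I would note that at a fixed singular-value support the imputation is locally a linear projection. In that case a chain-rule representer could be built on top of \eqref{eq:mc-rep}, but that lies outside the claim being proved.
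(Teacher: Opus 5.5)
Your proposal is correct and follows essentially the paper's argument. You take the treated-cell counterfactual mean as a linear functional of a generic outcome surface, with the nuclear-norm fit treated as a first-stage object, and match coefficients against cell indicators under the $(NT)^{-1}$ empirical inner product so that the $NT$ prefactor cancels the empirical-mass normalization; this yields $\E[\alpha_{\mathrm{MC}}^2]=NT/|\mathcal M|$ (e.g.\ $1209/12=100.75$ in the California panel). One caveat on your out-of-scope aside: soft-thresholding makes the MC imputation differentiable but not locally linear, let alone a projection, even at fixed rank, so a chain-rule extension would need the Jacobian of the training map, or else conditioning on estimated singular subspaces in the way the BJS diagnostic conditions on $\Pi$.
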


This representer is constant on the treated post-treatment cells and zero elsewhere. It is intentionally target-level: it bounds omitted-variable bias in the treated-cell counterfactual target conditional on the imputed surface. This target-level quantity is an interpretable scale for comparing MC with concentrated SDID weights, while a fully differentiated nuclear-norm training-map diagnostic is a distinct estimator-level object.

\noindent\emph{Group-time and fixed-effect-imputation representers.} For staggered designs, we record the $ATT(g,t)$ representer for the doubly robust score of \citet{SantAnnaZhao2020}, following the Riesz representation in \citet{BachKlaassenKueckMattesSpindler2025}.

\begin{proposition}[Group-time $ATT(g,t)$ fitted-score representer]
With comparison group $C$ and cross-fitted propensity $\widehat\pi(X_i)$,
\begin{equation}
\begin{aligned}
\alpha_{g,t}(W)=NT\Bigg\{&\frac{\Ind[i\in g]}{\Prob(i\in g)}-
\frac{\widehat\pi(X_i)\Ind[i\in C]}{(1-\widehat\pi(X_i))\Prob(i\in C)}\Bigg\}\\
&\times\{\Ind[s=t]-\Ind[s=g-1]\}.
\end{aligned}
\label{eq:att-rep}
\end{equation}
As with fixed-weight SDID, this is interpreted conditionally on the fitted first-stage functions.
\end{proposition}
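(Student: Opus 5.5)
The plan is to write the fitted $ATT(g,t)$ contrast as a linear functional of the outcome surface, holding the first-stage objects fixed, and then read off its coefficients under the empirical inner product, as in the proof of the SDID representer.

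\emph{Step 1: the fitted functional.} Start from the doubly robust score of \citet{SantAnnaZhao2020}. Condition on the cross-fitted propensity $\widehat\pi$, the outcome-regression nuisance, and the comparison set $C$. Under this conditioning the fitted $ATT(g,t)$ is a weighted mean of long differences $\Delta_i=Y_{it}-Y_{i,g-1}$. Cohort-$g$ units receive weight $1/\Prob(i\in g)$. Comparison units receive the normalized odds weight $\widehat\pi(X_i)/\{(1-\widehat\pi(X_i))\Prob(i\in C)\}$, with the population normalization written as in \eqref{eq:att-rep}.

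The outcome-regression term of the DR score also enters the functional. It is
\[
\{\Ind[i\in g]/\Prob(i\in g)-\widehat\pi\Ind[i\in C]/((1-\widehat\pi)\Prob(i\in C))\}\,\widehat\mu(X_i).
\]
This term is a fixed offset once the nuisance functions are fitted, so it does not depend on the outcome surface. It therefore drops out of the linear part that determines the representer. This is exactly the "fitted-score" convention stated in the proposition.

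\emph{Step 2: expand into cell coordinates.} Write $\Delta_i=\sum_s Y_{is}\{\Ind[s=t]-\Ind[s=g-1]\}$ and substitute into the functional. This gives
\[
\theta(Y)=\frac1N\sum_i\sum_s w_i\{\Ind[s=t]-\Ind[s=g-1]\}Y_{is},
\]
where $w_i$ is the bracketed unit weight.

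\emph{Step 3: match against the empirical inner product.} The empirical-distribution inner product on the panel is $\langle a,b\rangle=(NT)^{-1}\sum_{i,s}a(i,s)b(i,s)$. Matching coefficients requires $\theta(Y)=\langle\alpha,Y\rangle$ for every surface $Y$. This yields $\alpha=T\,w_i\{\cdots\}$, or $NT\,w_i\{\cdots\}$ if the unit average is taken as an unnormalized sum over units. I will fix that convention so that the $NT$ prefactor in \eqref{eq:att-rep} appears exactly, mirroring the remark after the SDID proposition. Uniqueness follows because the inner product is nondegenerate on the finite cell space.

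\emph{Main obstacle.} The hardest step is bookkeeping the normalizations: population $\Prob(i\in g)$ versus sample shares, and the $NT$ versus $T$ factor. I will resolve this by adopting the statement's convention for the normalizing constants and checking it numerically, as was done for SDID. A second point needs a sentence of justification: cross-fitting makes $\widehat\pi(X_i)$ independent of unit $i$'s own outcomes. This is why treating the weights as fixed is the natural conditional interpretation, and why no derivative of the first stage is included.
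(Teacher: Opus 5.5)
Your route is the one the paper relies on: hold the first stage fixed, write the fitted contrast as a linear functional of the cell surface, and match coefficients under $(NT)^{-1}\sum_{i,s}$. The paper gives no separate derivation beyond citing Bach et al.\ and its SDID coefficient-matching remark. The gap is that the two normalizations you defer to ``the statement's convention'' are exactly where coefficient matching does not return the displayed formula. Adopting them therefore assumes the conclusion rather than proving it.

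\emph{The prefactor.} Your Step 3 computation is correct. With unit weights $w_i$ built from $1/\Prob(i\in g)$, the fitted contrast is $N^{-1}\sum_i w_i\Delta_i$, and matching gives $\alpha=T\,w_i\{\Ind[s=t]-\Ind[s=g-1]\}$. Reading the unit average as an unnormalized sum is not a convention: it represents $N\cdot ATT(g,t)$ instead of $ATT(g,t)$. The factor $NT$ is right only when the unit weights are sample-normalized to sum to one within each group, as $\widehat\omega_i\widehat\lambda_t/W_{g,p}$ are in SDID (e.g.\ $\Ind[i\in g]/N_g$). It is not right when the weights carry $1/\Prob(i\in g)$.

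\emph{The comparison normalizer.} $\Prob(i\in C)$ in the comparison term is not what the Sant'Anna--Zhao score gives. There the odds weights are divided by $\E[\widehat\pi(X)\Ind[i\in C]/(1-\widehat\pi(X))]$. At the true propensity $\pi(X)=\Prob(i\in g\mid X,\,i\in g\cup C)$ this equals $\Prob(i\in g)$, because $\pi/(1-\pi)=\Prob(i\in g\mid X)/\Prob(i\in C\mid X)$.

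A quick check needs no numerics. $ATT(g,t)$ is invariant to common shocks $Y_{is}=\delta_s$, so the bracketed unit weights must have mean zero. With $\Prob(i\in C)$ they instead have mean $1-\Prob(i\in g)/\Prob(i\in C)$, so your planned numerical verification would fail against the display. What your argument actually establishes is
\[
\alpha_{g,t}=T\Bigl\{\frac{\Ind[i\in g]}{\Prob(i\in g)}-\frac{\widehat\pi(X_i)\Ind[i\in C]}{(1-\widehat\pi(X_i))\Prob(i\in g)}\Bigr\}\{\Ind[s=t]-\Ind[s=g-1]\}.
\]
Equivalently, with sample-normalized weights summing to one in each group, it is $NT$ times those weights. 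You should state the normalization you derive rather than import the displayed one.

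\emph{A smaller point.} The first-stage object that depends on outcomes, and for which cross-fitting matters, is the outcome regression $\widehat\mu$ fitted on comparison-unit differences. $\widehat\pi$ never uses outcomes.
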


The fixed-effect imputation estimator of \citet{BorusyakJaravelSpiess2024} estimates unit and period effects on untreated cells and imputes treated-cell counterfactuals from the fitted two-way fixed-effect structure. Let $\mathcal U$ and $\mathcal W$ denote untreated and treated cells, $N_1=|\mathcal W|$, and $N_{\mathrm{cell}}=NT$. Let
\[
\Pi=X_{\mathcal W}(X_{\mathcal U}'X_{\mathcal U})^{-}X_{\mathcal U}',
\qquad \widehat Y_{\mathcal W}(0)=\Pi Y_{\mathcal U}.
\]
The BJS ATT is $\widehat\tau_{\mathrm{BJS}}=N_1^{-1}{\bf 1}_{\mathcal W}'(Y_{\mathcal W}-\Pi Y_{\mathcal U})$.

\begin{proposition}[Fixed-effect imputation representer]
Conditional on the untreated-cell imputation operator $\Pi$, the empirical-distribution Riesz representer for $\widehat\tau_{\mathrm{BJS}}$ is
\begin{equation}
\alpha^{\mathrm{BJS}}_{it}=N_{\mathrm{cell}}/N_1\quad\text{for }(i,t)\in\mathcal W,
\qquad
\alpha^{\mathrm{BJS}}_{js}=-\frac{N_{\mathrm{cell}}}{N_1}(\Pi'{\bf 1}_{\mathcal W})_{js}
\quad\text{for }(j,s)\in\mathcal U.
\label{eq:bjs-rep}
\end{equation}
Its Riesz second moment is
\begin{equation}
\E[(\alpha^{\mathrm{BJS}})^2]=N_{\mathrm{cell}}\left\{\frac{1}{N_1}+\|N_1^{-1}\Pi'{\bf 1}_{\mathcal W}\|^2\right\}.
\label{eq:bjs-second}
\end{equation}
The representer annihilates additive unit and period shifts, so the diagnostic responds to deviations from the two-way fixed-effect imputation structure rather than to the fixed effects themselves.
\end{proposition}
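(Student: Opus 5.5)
The plan is to treat $\widehat\tau_{\mathrm{BJS}}$, conditional on $\Pi$, as a fixed linear functional of the full outcome vector $Y=(Y_{\mathcal W},Y_{\mathcal U})$. Under the empirical-distribution inner product we have $\langle a,b\rangle=N_{\mathrm{cell}}^{-1}\sum_{(i,t)}a_{it}b_{it}$. A representer is then any $\alpha$ with $\widehat\tau_{\mathrm{BJS}}(Y)=\langle\alpha,Y\rangle$ for every surface $Y$.

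First, write the estimator as $\widehat\tau_{\mathrm{BJS}}=N_1^{-1}{\bf 1}_{\mathcal W}'Y_{\mathcal W}-N_1^{-1}(\Pi'{\bf 1}_{\mathcal W})'Y_{\mathcal U}$. This is just transposing the scalar ${\bf 1}_{\mathcal W}'\Pi Y_{\mathcal U}$.

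Second, match coefficients cell by cell. Because $\langle\alpha,Y\rangle=N_{\mathrm{cell}}^{-1}\sum\alpha_{it}Y_{it}$, each coefficient is multiplied by $N_{\mathrm{cell}}$. Treated cells get $N_{\mathrm{cell}}/N_1$, and untreated cells get $-N_{\mathrm{cell}}N_1^{-1}(\Pi'{\bf 1}_{\mathcal W})_{js}$, which gives \eqref{eq:bjs-rep}. Uniqueness holds within the span of the cell indicators, since the inner product is positive definite on $\mathbb R^{NT}$.

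Third, compute the second moment. We have $\E[\alpha^2]=N_{\mathrm{cell}}^{-1}\sum\alpha_{it}^2$. The treated block contributes $N_{\mathrm{cell}}^{-1}N_1(N_{\mathrm{cell}}/N_1)^2=N_{\mathrm{cell}}/N_1$. The untreated block contributes $N_{\mathrm{cell}}^{-1}N_{\mathrm{cell}}^2\|N_1^{-1}\Pi'{\bf 1}_{\mathcal W}\|^2$. Adding the two gives \eqref{eq:bjs-second}.

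The last claim is the annihilation property, which I expect to be the only substantive step. Take $Y_{it}=a_i+b_t$, so that $Y=Xc$ for the full two-way design $X$, with rows $X_{\mathcal W}$ and $X_{\mathcal U}$. Then
\[
\langle\alpha,Y\rangle=N_1^{-1}{\bf 1}_{\mathcal W}'\bigl(X_{\mathcal W}-X_{\mathcal W}(X_{\mathcal U}'X_{\mathcal U})^{-}X_{\mathcal U}'X_{\mathcal U}\bigr)c.
\]
The obstacle is the generalized inverse. The identity $X_{\mathcal W}(X_{\mathcal U}'X_{\mathcal U})^{-}X_{\mathcal U}'X_{\mathcal U}=X_{\mathcal W}$ holds whenever the row space of $X_{\mathcal W}$ lies in the row space of $X_{\mathcal U}$. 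In that case $X_{\mathcal W}=BX_{\mathcal U}$ for some $B$, and the claim follows from $X_{\mathcal U}(X_{\mathcal U}'X_{\mathcal U})^{-}X_{\mathcal U}'X_{\mathcal U}=X_{\mathcal U}$.

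That row-space condition is exactly the BJS identification condition: every treated unit has some untreated period, and every treated period is connected to untreated cells. I would state it as the standing assumption under which $\Pi$ is well defined, that is, invariant to the choice of generalized inverse. Under that assumption the bracket vanishes, so $\alpha^{\mathrm{BJS}}$ is orthogonal to all additive unit and period shifts. The diagnostic therefore responds only to deviations from the two-way structure.
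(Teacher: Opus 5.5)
Your proposal is correct and follows the paper's own proof: write $\widehat\tau_{\mathrm{BJS}}$ as $\sum c_{js}Y_{js}$, set $\alpha^{\mathrm{BJS}}=N_{\mathrm{cell}}c$ under the empirical inner product, and read off the second moment. For the annihilation claim, the paper only remarks that additive shifts lie in the fixed-effect design's column span. Your explicit condition, that the rows of $X_{\mathcal W}$ lie in the row space of $X_{\mathcal U}$ (which also makes $\Pi$ independent of the generalized inverse), is the hypothesis that remark implicitly relies on.
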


\subsection{Comparison of Riesz Scales}

Table \ref{tab:riesz-real} compares the Riesz second moments of MC, BJS, and fixed-weight SDID on synthetic and California tobacco-control panels. The MC second moment equals $NT/|\mathcal M|$ and therefore depends only on panel dimensions and treated target cells. The BJS and SDID moments also depend on the untreated-cell imputation operator and fitted SDID weights. The table reports $\E[\alpha^2]$ on the empirical-distribution scale; the Route A scale $M$ used in robustness values is $\{\E[\alpha^2\sigma^2_{Y\mid D,X}]\}^{1/2}$ and therefore also incorporates the residual-variance factor.

\begin{center}
\begin{minipage}{0.98\textwidth}
\centering
\captionof{table}{Riesz-representer second moments and numerical verification.}
\label{tab:riesz-real}
\begin{tabular}{lcc}
\toprule
Quantity & Synthetic check & California panel \\
\midrule
\multicolumn{3}{l}{\emph{Second moments on the empirical-distribution scale}}\\
$\E[\alpha^2_{\mathrm{MC}}]$ & 100.75 & 100.75 \\
$\E[\alpha^2_{\mathrm{BJS}}]$ & 16.00 & 168.71 \\
$\E[\alpha^2_{\mathrm{SDID}}]$ & 281.45 & 567.83 \\
SDID/MC ratio & $2.79\times$ & $5.64\times$ \\
\addlinespace[2pt]
\multicolumn{3}{l}{\emph{Numerical verification errors}}\\
BJS direct-vs-representer error & $1.78\times 10^{-15}$ & $8.78\times10^{-13}$ \\
Proposition 1 verified & machine precision & $2.22\times10^{-16}$ \\
Proposition 2 verified & machine precision & $2.22\times10^{-16}$ \\
\bottomrule
\end{tabular}
\end{minipage}
\end{center}

\begin{center}
\begin{minipage}{0.98\textwidth}
\centering
\includegraphics[width=0.98\textwidth]{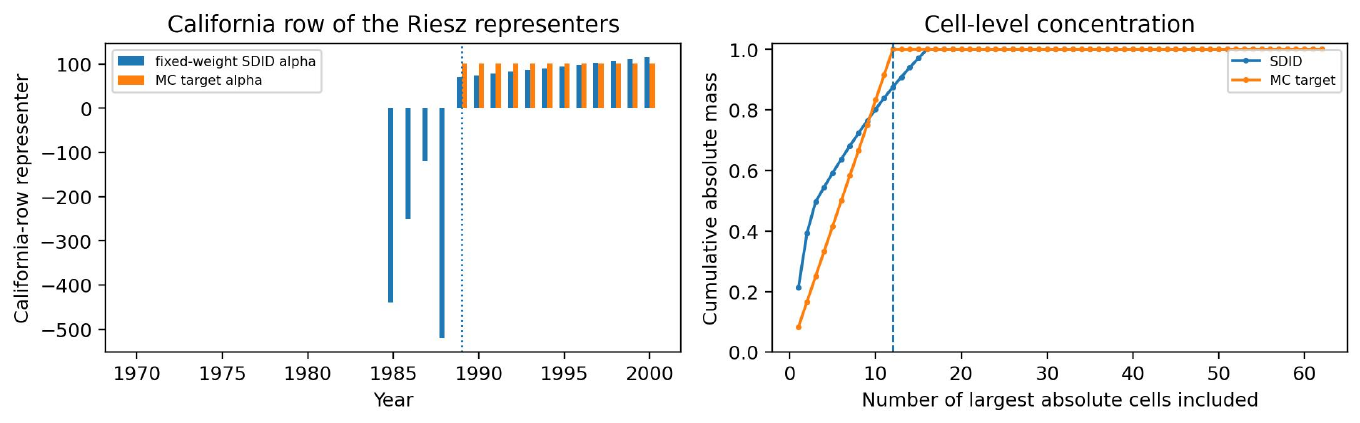}
\captionof{figure}{Cell-level concentration of the SDID and MC Riesz diagnostics on the California tobacco-control panel. The left panel shows the California row over time; the right panel shows cumulative absolute representer mass. The fixed-weight SDID diagnostic is more concentrated than the uniform MC target-level diagnostic.}
\label{fig:alpha-comparison}
\end{minipage}
\end{center}

Figure \ref{fig:alpha-comparison} plots the cell-level concentration behind this scale contrast. In the California tobacco-control panel, the fitted SDID weights are sufficiently concentrated that the fixed-weight SDID representer has a much larger second moment than either the MC target representer or the BJS fixed-effect-imputation representer. We call $1/\{\E[\alpha_{\mathrm{SDID}}^2]/\E[\alpha_{\mathrm{MC}}^2]\}$ the equivalent product-support share: the share of a uniform product support that would generate the same second-moment inflation. The California tobacco-control ratio $567.83/100.75=5.64$ implies an equivalent product-support share of 0.177. This is not a replacement for the Riesz calculation, but it gives an accessible diagnostic for why concentrated SDID weights amplify the omitted-variable-bias scale.

A finite-difference refit diagnostic quantifies what the fixed-weight convention omits. Across 80 random outcome perturbation directions in the California tobacco-control panel, refit and fixed-weight derivatives have correlation 0.941 with median symmetric relative difference 0.250. The same diagnostic translates to the downstream Route A decision scale: the exact fixed-weight, random-projection, and refit-projection nullification RVs are 0.054, 0.054, and 0.045. Refit derivatives raise the local Riesz scale from 281.98 to 339.68 and lower the nullification RV from 0.054 to 0.045. In this application, ignoring first-stage weight feedback is therefore mildly anti-conservative on the nullification scale. We do not assume that this sign is universal: the mechanism is that perturbing the outcome surface changes the pre-period balance problem, and the finite-difference audit measures the induced local movement in the fitted SDID contrast. The qualitative conclusion remains in the same low-single-digit robustness range.

\FloatBarrier
\section{Asymptotic Theory and Operating Characteristics}
\label{sec:asymptotic}

We state two asymptotic theorems for the calibrated Route B case and then report simulation operating characteristics. The theorems characterize the case in which the auxiliary benchmark model and dominance conditions are credible. When those diagnostics fail, the empirical workflow reverts to Route A as the primary analysis. Proofs are provided in Appendix A.

\subsection{Asymptotic Theory}

We work with the dependence-adjusted auxiliary formulation of Section \ref{sec:bayes:routeB}. The benchmark sequence may be weakly dependent, but its tempered composite score is assumed to satisfy a local asymptotic normal expansion with effective information proportional to $p_{\mathrm{eff}}$. The remaining conditions are identification of the working population family; plug-in benchmark-estimation consistency at rate $(\log p/n)^{1/2}$; differentiability of the odds-scale link; prior support around the truth; non-borderline robustness-value smoothness; and interior sensitivity coordinates. Regime A is auxiliary-information dominant, with $n$ growing faster than $p_{\mathrm{eff}}\log p$; Regime B is benchmark-estimation dominant. The contraction argument follows the finite-dimensional LAN template in \citet{VanDerVaart1998} and \citet{GhosalVanderVaart2017}, with composite-score information and plug-in error carried separately. If $p_{\mathrm{eff}}$ does not diverge, adding redundant covariates cannot produce a shrinking Route B posterior.

\begin{theorem}[Posterior contraction]
Under the regularity conditions above, the posterior of the OVB bound satisfies
\[
\Pi\left(|B(R^2_{yu},R^2_{\alpha u})-B(R^{2*}_{yu},R^{2*}_{\alpha u})|>C\varepsilon_{n,p}\mid \widehat m^1_{obs},\ldots,\widehat m^p_{obs}\right)\to 0,
\]
in probability, with rate $\varepsilon_{n,p}=\max\{p_{\mathrm{eff}}^{-1/2},(\log p/n)^{1/2}\}$.
\end{theorem}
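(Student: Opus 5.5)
The plan is to treat the statement as a finite-dimensional contraction result for the auxiliary parameter $\theta$, pushed forward through the odds link $T(\cdot;\kappa)$ and the bound map $B$. I split the posterior error into two sources: the statistical error from having only $p_{\mathrm{eff}}$ effective benchmarks, and the plug-in error from estimating each benchmark pair $\widehat m^j_{obs}$ from $n$ observations rather than observing the population pairs $m^j_{obs}$.

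\emph{Step 1: oracle contraction.} First consider the oracle posterior built from the tempered likelihood \eqref{eq:auxlik} evaluated at the true pairs $m^j_{obs}$. By the assumed LAN expansion, the tempered composite log-likelihood ratio around $\theta^*$ has the form
\[
\sqrt{p_{\mathrm{eff}}}\,h'\Delta_p-\tfrac12 h'I_{\theta^*}h+o_P(1),\qquad \theta=\theta^*+h/\sqrt{p_{\mathrm{eff}}},
\]
where $\Delta_p=O_P(1)$ and $I_{\theta^*}$ is positive definite. Positive definiteness of $I_{\theta^*}$ is where identification of the working family enters. Combined with prior positivity near $\theta^*$, this verifies the conditions of the parametric contraction theorem in \citet{GhosalVanderVaart2017} (or the Bernstein--von Mises-type argument of \citet{VanDerVaart1998}). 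That theorem needs uniformly consistent tests of $\theta^*$ against $\{\|\theta-\theta^*\|>\varepsilon\}$. I would build these from the identifiability condition together with a uniform law of large numbers for the tempered score. The conclusion is $\Pi(\|\theta-\theta^*\|>Cp_{\mathrm{eff}}^{-1/2}\mid m_{obs})\to0$.

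\emph{Step 2: plug-in perturbation.} Next I replace $m^j_{obs}$ by $\widehat m^j_{obs}$, using $\max_j\|\widehat m^j_{obs}-m^j_{obs}\|=O_P((\log p/n)^{1/2})$. If $\log f(m;\theta)$ is Lipschitz in $m$ locally uniformly in $\theta$, the tempered log-likelihood moves by at most $\omega_p\, p\, O_P((\log p/n)^{1/2})=p_{\mathrm{eff}}\,O_P((\log p/n)^{1/2})$. Consequently the posterior mode and the LAN centering shift by $O_P((\log p/n)^{1/2})$ in $\theta$. To make this rigorous I would rewrite the perturbed log-likelihood as a LAN expansion with an added linear drift term, and then repeat the testing argument with radius $\varepsilon_{n,p}$.

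This step is the main obstacle. The drift enters the exponent multiplied by $p_{\mathrm{eff}}$, so the plug-in error could in principle dominate or destabilize the posterior. My first attempt is to show that the drift is a score-weighted average of plug-in errors, hence of size $p_{\mathrm{eff}}\cdot(\log p/n)^{1/2}$ times a bounded vector. Completing the square then shows that it only recenters the Gaussian approximation by $O_P((\log p/n)^{1/2})$. The Regime A/B distinction just records which of the two terms in $\varepsilon_{n,p}$ dominates.

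\emph{Step 3: pushforward.} The map $\theta\mapsto(R^2_{yu},R^2_{\alpha u})=T(\theta;\kappa)$ is locally Lipschitz. This follows from differentiability of $T_0$ and of $g_\kappa$ in \eqref{eq:odds-link}, and from the interior-coordinates assumption, which keeps the cap in \eqref{eq:link} inactive near the truth. The bound $B=M\{R^2_{yu}R^2_{\alpha u}/(1-R^2_{\alpha u})\}^{1/2}$ from \eqref{eq:post-bound} is also locally Lipschitz on $\{R^2_{yu},R^2_{\alpha u}\in[\delta,1-\eta_{\mathrm{cap}}]\}$. The square root is smooth there because the interior coordinates keep $R^2_{yu}R^2_{\alpha u}$ bounded away from zero. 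Composing the two maps gives $|B-B^*|\le L\|\theta-\theta^*\|$ on a neighborhood of $\theta^*$ that carries posterior mass tending to one. This transfers the rate $\varepsilon_{n,p}$ to $B$ with constant $C=L C_\theta$, which yields the display in the theorem. Treating $M$ as fixed (or as its consistent plug-in) completes the argument.
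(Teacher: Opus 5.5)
Your decomposition matches the paper's own proof. Lemma A.1 obtains contraction of $\theta$ at $p_{\mathrm{eff}}^{-1/2}$ from the LAN expansion and lets the plug-in error move the LAN center by $O_{\Prob^*}(\delta_{n,p})$, with $\delta_{n,p}=(\log p/n)^{1/2}$. The proof of Theorem 1 then transfers the rate through the locally Lipschitz map $\theta\mapsto B(T(\theta;\kappa))$, exactly as your Step 3 does.

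The real difference is Step 2. The paper does not derive the recentering: its Assumption A.2 simply posits that the plug-in perturbation is first-order equivalent to a local shift of size $O_{\Prob^*}(\delta_{n,p})$ in the LAN expansion. Your derivation of it does not go through under the hypothesis you state. Write $\ell_0$ and $\ell_1=\ell_0+e$ for the oracle and plug-in tempered log-likelihoods. Lipschitz dependence of $\log f(m;\theta)$ on $m$ bounds only the level, $\sup_\theta|e(\theta)|\le p_{\mathrm{eff}}\,O_{P}(\delta_{n,p})$. With curvature of order $p_{\mathrm{eff}}$, the inequality $\ell_0(\hat\theta_0)-\ell_0(\hat\theta_1)\le 2\sup_\theta|e(\theta)|$ for the two maximizers only gives $\|\hat\theta_1-\hat\theta_0\|=O(\delta_{n,p}^{1/2})$. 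A level bound cannot rule out perturbations that attain this.

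To get an $O(\delta_{n,p})$ shift you need $\sup_\theta\|\nabla_\theta e(\theta)\|=p_{\mathrm{eff}}\,O_{P}(\delta_{n,p})$, i.e.\ Lipschitz dependence of the score $\nabla_\theta\log f(m;\theta)$ on $m$ (a bounded mixed derivative). That is what your ``score-weighted average of plug-in errors'' remark actually uses, so it should be the stated hypothesis. There is a further issue in Regime B. There the shift is $\sqrt{p_{\mathrm{eff}}}\,\delta_{n,p}\to\infty$ in $h$-coordinates, beyond the bounded-$h$ range of the LAN condition. Completing the square therefore also needs the quadratic expansion to hold uniformly on a $\theta$-ball of radius $C\delta_{n,p}$. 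Either add these two conditions or invoke A.2 as the paper does.

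Two smaller points. First, your LAN display has a spurious factor. With $\theta=\theta^*+h/\sqrt{p_{\mathrm{eff}}}$ and $\Delta_p=O_P(1)$, the linear term must be $h'\Delta_p$. As written, the quadratic peaks at $\theta-\theta^*=I_{\theta^*}^{-1}\Delta_p=O_P(1)$, and there is no contraction.

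Second, the testing theorem of \citet{GhosalVanderVaart2017} needs the likelihood ratio to be a genuine density ratio for the observed data. That is what lets Fubini turn the expected numerator under $\theta^*$ into type-II errors. It fails for a tempered composite likelihood of dependent plug-in benchmarks, so the tests you describe cannot be plugged into that theorem. In this finite-dimensional problem, use your ingredients directly instead:
\begin{itemize}
\item identifiability plus a uniform law of large numbers for $p^{-1}\sum_j\log f_\theta(\widehat m_j)$ give $\ell_1(\theta)-\ell_1(\theta^*)\le-c\,p_{\mathrm{eff}}$ outside a fixed ball, with probability tending to one;
\item a uniform local quadratic bound handles the shell down to radius $C\varepsilon_{n,p}$;
\item prior positivity near the (shifted) center bounds the denominator below polynomially in $p_{\mathrm{eff}}$.
\end{itemize}
The paper's proof does not address posterior mass outside the LAN neighborhood at all, so on this point your plan is more complete than its sketch.
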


\begin{lemma}[Monotone predictive quantiles]
Let $B(R)=M\{R_yR_\alpha/(1-R_\alpha)\}^{1/2}$ on $[\eta,1-\eta]^2$. A sufficient condition for conservative worst-direction endpoints is stochastic dominance of the scalar bound: if the predictive distribution of $B(R)$ dominates the true hidden-confounder bound distribution, then sign-adjusted one-sided predictive quantiles of the bias-adjusted estimate are conservative up to the approximation error in the predictive law. Product-order dominance of $R$ is an easy-to-state sufficient condition because $B$ is coordinatewise increasing, but it is not necessary.
\end{lemma}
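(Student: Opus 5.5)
The plan is to reduce the statement to a quantile-monotonicity fact for a nonincreasing map of the scalar bound $B$. I will then verify separately that product-order dominance of $R$ implies dominance of $B$.

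\emph{Step 1 (the adjusted estimate is a monotone function of $B$).} By \eqref{eq:adj-est}, the adjusted magnitude is $h(B)=\max\{|\widehat\theta|-B,0\}$ with sign $\mathrm{sign}(\widehat\theta)$. The map $h$ is nonincreasing and right-continuous in $B$ (it is in fact continuous). Assume $\widehat\theta>0$; the other sign is symmetric after reflecting $\theta\mapsto-\theta$. The worst-direction adjusted estimate is then $h(B)$, and the conservative endpoint is a lower predictive quantile.

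\emph{Step 2 (quantiles of a nonincreasing transform).} Let $B_{\mathrm{pred}}$ denote the predictive law of the bound and $B^*$ the true hidden-confounder bound law. Suppose $B_{\mathrm{pred}}\succeq_{\mathrm{st}}B^*$, i.e.\ $\Prob(B_{\mathrm{pred}}>b)\geq\Prob(B^*>b)$ for all $b$. I will use the standard quantile-coupling representation $B=F^{-1}(V)$ with $V\sim\mathrm{Unif}(0,1)$. Then $F^{-1}_{\mathrm{pred}}(v)\geq F^{-1}_{*}(v)$ pointwise, so $h(B_{\mathrm{pred}})\leq h(B^*)$ almost surely under this coupling. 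Consequently the lower $\beta$-quantile satisfies $Q_\beta\{h(B_{\mathrm{pred}})\}\leq Q_\beta\{h(B^*)\}$. In words, the one-sided predictive endpoint lies weakly closer to zero than the true one, which is the meaning of conservative.

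\emph{Step 3 (approximation error).} If the predictive law is only approximated, say within Kolmogorov or Lévy distance $\delta$ of a dominating law, the conclusion holds with $\beta$ shifted by $\delta$. This accounts for the phrase ``up to approximation error.''

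\emph{Step 4 (product order suffices but is not necessary).} On $[\eta,1-\eta]^2$, $B$ is coordinatewise nondecreasing. The factor $R_y$ is increasing, and $R_\alpha/(1-R_\alpha)$ is increasing with derivative $(1-R_\alpha)^{-2}>0$. The restriction to $[\eta,1-\eta]^2$ keeps $B$ finite and continuous. If $R_{\mathrm{pred}}$ dominates $R^*$ in the usual multivariate stochastic order, Strassen's theorem gives a coupling with $R_{\mathrm{pred}}\geq R^*$ coordinatewise almost surely. Monotonicity of $B$ then yields $B(R_{\mathrm{pred}})\geq B(R^*)$ almost surely, hence scalar dominance.

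For non-necessity, I will construct a counterexample. Take point masses $R_{\mathrm{pred}}=(0.9,0.2)$ and $R^*=(0.3,0.3)$. Then $B^2\propto 0.9\cdot0.25=0.225$ for the predictive law, which exceeds $0.3\cdot0.3/0.7\approx0.129$ for the true law. So $B$ dominates, but $R$ is not product-ordered because $0.2<0.3$ in the $\alpha$ coordinate.

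The main subtlety I expect is Step 4's use of Strassen's coupling. Multivariate stochastic order must be taken in the upper-set sense, not merely as dominance of the two marginals. Marginal dominance alone would not suffice for an arbitrary dependence structure. If that becomes delicate, I will instead state product-order dominance directly as the existence of an ordered coupling.
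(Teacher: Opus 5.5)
Your proof is correct. It rests on the same two monotonicity facts the paper uses: the worst-direction adjusted estimate is monotone in $B$, and $B$ is coordinatewise increasing on $[\eta,1-\eta]^2$. Where you differ is the machinery for the product-order step.

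The paper (Lemma~\ref{lem:product-scalar}) builds no coupling. For each $b$ the set $\{R: B(R)>b\}$ is an upper set, so upper-set (product-order) dominance gives $Q_1\{B(R)>b\}\ge Q_0\{B(R)>b\}$ in one line. Your Strassen coupling reaches the same conclusion. However, the subtlety you anticipate, that the order must be the upper-set order rather than marginal dominance, is exactly the definition the direct argument uses, so you can drop Strassen without loss.

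In the other direction, you supply two pieces the paper leaves implicit:
\begin{itemize}
\item the quantile-coupling transfer from scalar dominance of $B$ to an ordering of one-sided quantiles of the adjusted estimate, which the paper only asserts inside the proof of Theorem 2;
\item an explicit witness that product-order dominance is not necessary, which the paper states without argument.
\end{itemize}

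Two small precision points:
\begin{itemize}
\item Your point masses $(0.9,0.2)$ and $(0.3,0.3)$ lie in the domain only when $\eta\le 0.1$. For general $\eta$, move an interior point by $(+\epsilon_1,-\epsilon_2)$ with $\epsilon_2$ small relative to $\epsilon_1$.
\item An approximation in L\'evy distance shifts the quantile value by $\delta$ as well as the level $\beta$. This is harmless because $h$ is $1$-Lipschitz, whereas a Kolmogorov approximation shifts only the level.
\end{itemize}
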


\begin{theorem}[Conditional posterior-predictive conservativeness]
Under the contraction conditions of Theorem 1, robustness-value smoothness, interior sensitivity coordinates, regular sampling error, and scalar-bound predictive dominance of the Route B predictive distribution over the true hidden-confounder bound distribution, the sign-adjusted worst-direction posterior-predictive endpoint for the bias-adjusted estimate is conservative: the corresponding one-sided predictive statement has frequentist coverage at least $1-\alpha+o(1)$.
\end{theorem}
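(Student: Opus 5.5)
The plan is to reduce the coverage statement to a one-sided inequality on a scalar random variable and then chain three approximations: sampling error of $\widehat\theta$, posterior contraction (Theorem 1), and predictive dominance (Lemma 1).

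Fix the sign convention by assuming $\widehat\theta>0$ (the other case is symmetric via $\mathrm{sign}(\widehat\theta)$). The true long estimand satisfies $\theta_{\mathrm{long}}\ge \theta_{\mathrm{short}}-B^*$ by the OVB bound \eqref{eq:ovb}, where $B^*=B(R^{2*}_{yu},R^{2*}_{\alpha u})$ is the hidden-confounder bound. Write $\widehat\theta-\theta_{\mathrm{short}}=SE\cdot Z_n$ with $Z_n\Rightarrow N(0,1)$ under the regular-sampling-error assumption. The one-sided endpoint is
\[
L=\widehat\theta-c_t SE-Q_{1}\bigl(B\mid \widehat m_{obs}\bigr),
\]
where $Q_1$ is the upper predictive quantile of the bound (the degenerate version suffices for an endpoint statement; with a $1-\beta$ quantile one splits the level). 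Then
\[
\{\theta_{\mathrm{long}}<L\}\subseteq\{SE\,Z_n>c_tSE\}\cup\{B^*>Q(B\mid\widehat m_{obs})\}.
\]
The first event has probability $\alpha+o(1)$ by the normal approximation. The rest of the argument shows that the second event has vanishing probability, or is absorbed.

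To control the second event, I invoke Lemma 1. Scalar-bound predictive dominance says the Route B predictive law of $B$ stochastically dominates the true bound distribution, so its relevant quantile exceeds the true $B^*$, or the true upper quantile when $B^*$ is itself random across the hidden-confounder population. This holds for the population predictive law indexed by the true $\theta_0$. Theorem 1 transfers it to the posterior-predictive law: contraction at rate $\varepsilon_{n,p}$, together with differentiability of $T(\cdot;\kappa)$ and interior coordinates (so $B$ is Lipschitz on $[\eta,1-\eta]^2$ and the cap $1-\eta_{\mathrm{cap}}$ is inactive), gives
\[
|Q(B\mid\widehat m_{obs})-Q(B\mid\theta_0)|=O_p(\varepsilon_{n,p})\to0.
\]
Hence $Q(B\mid\widehat m_{obs})\ge B^*-o_p(1)$. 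The $o_p(1)$ slack is absorbed by continuity of the limiting normal law at $c_t$, so the overall probability of the bad event is at most $\alpha+o(1)$.

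Robustness-value smoothness is used only to state the equivalent result on the RV scale via \eqref{eq:rv}. The map $\widetilde K\mapsto RV$ is continuous and monotone away from the kink $\widetilde K=0$, and the non-borderline assumption rules out mass at that kink. Monotonicity therefore preserves the one-sided ordering.

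The main obstacle is the transfer step: Lemma 1 is stated for the predictive distribution under the auxiliary model, whereas coverage involves the posterior mixture over $\theta$ together with the independent sampling error of $\widehat\theta$. I would first try to argue that the benchmark statistics $\widehat m_{obs}$ and $Z_n$ are asymptotically independent, or at least that conditioning on the benchmarks does not distort the normal limit. If that fails, the fallback is a union bound with a split level $\alpha_1+\alpha_2=\alpha$, accepting the $o(1)$ quantile approximation error named in Lemma 1.
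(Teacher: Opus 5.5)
Your decomposition proves coverage for a different endpoint from the one in the statement. The reported endpoint is a one-sided quantile of the posterior-predictive law of the bias-adjusted estimate, with the sampling component already mixed in. For $\widehat\theta>0$ it is $\widehat\theta-t$, where $t$ is the $(1-\alpha)$ quantile of $B(R)+SE\,Z'$ and $Z'$ is drawn independently of $R$. Your $L=\widehat\theta-c_tSE-Q(B\mid\widehat m_{obs})$ instead adds separately chosen quantiles.

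Your union bound closes only if $\{B^*>Q(B\mid\widehat m_{obs})\}$ is negligible. That requires either a point-mass hidden-confounder bound or $Q$ at the top of the predictive support. The statement, and Assumption A.5 with its $\Prob(B_0>b)$, concern a genuine \emph{distribution} of hidden-confounder bounds. First-order stochastic dominance gives no pointwise inequality $Q\ge B_0$: for a $(1-\beta)$ predictive quantile the event has probability up to $\beta$.

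Both of your repairs change the object being certified. The support maximum is a worst-case radius with no predictive averaging. The split-level version certifies $\widehat\theta-z_{1-\alpha_1}SE-Q_{1-\alpha_2}(B)$. This is never closer to $\widehat\theta$ than $\widehat\theta-t$, because $Q^\kappa_{\widehat\theta}\{B(R)+SE\,Z'>Q_{1-\alpha_2}(B)+z_{1-\alpha_1}SE\}\le\alpha_1+\alpha_2$, and it is typically strictly further out. Coverage of a more extreme endpoint does not transfer to the reported one. Relatedly, $B(R^{2*}_{yu},R^{2*}_{\alpha u})$ in Theorem 1 is the auxiliary-model contraction target $B(T(\theta^*;\kappa))$, not the true hidden-confounder bound. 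The two are connected only through the dominance assumption.

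The missing idea is that first-order stochastic dominance is preserved under convolution with an independent component. That is what ``adding the regular sampling component'' does in the paper's argument. Take the sampling error to be independent of the hidden-confounder draw and, by Assumption A.4, asymptotically independent of the benchmark update. Conditioning on $Z$ and applying A.5 at each shifted threshold then gives, for every $s$,
\[
\Prob(B_0+SE\,Z>s)=\E\bigl[\Prob(B_0>s-SE\,Z\mid Z)\bigr]\le Q^\kappa_{\widehat\theta}\{B(R)+SE\,Z'>s\}+o(1).
\]
Setting $s=t$ and using $\theta_{\mathrm{long}}\ge\widehat\theta-SE\,Z-B_0$ yields coverage at least $1-\alpha+o(1)$ at the reported quantile, with no level splitting.

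Theorem 1 is then needed only to replace the posterior-mixture quantile by the plug-in quantile up to $o_{\Prob^*}(1)$. A.5 is already stated for $Q^\kappa_{\widehat\theta}$, so no transfer from $\theta_0$ is required. It is the normal component that makes the combined predictive distribution function continuous and strictly increasing. That continuity is what justifies a quantile approximation; Lipschitz continuity of $B$ alone does not give the $O_p(\varepsilon_{n,p})$ quantile closeness you assert for the bare law of $B$.
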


The dominance requirement is therefore best read at the scale of the bound that enters the decision problem. Product-order dominance of $(R^2_{yu},R^2_{\alpha u})$ remains a transparent sufficient condition; scalar dominance of $B(R)$ is weaker and can be defended directly by arguing that the selected link and $\kappa$ produce a predictive distribution no less pessimistic than the analyst's substantive upper envelope for hidden confounding. The M1 diagnostic does not test this premise. If the hidden-confounder scale cannot plausibly be bounded by the chosen link and $\kappa$, reporting reverts to Route A.

Theorem 2 covers regular sampling components. The single-treated-unit tobacco-control study uses a discrete donor-placebo distribution with 38 donor pseudo-treatments plus the realized California assignment, so the theorem is separate from the finite-donor rank calculation reported there. Because the baseline add-one rank is already $2/39=0.051$, the conventional-significance RV is mechanically near zero; the interpretable sensitivity quantity is the nullification RV.

All simulation and stress diagnostics use direct Monte Carlo draws rather than Markov-chain Monte Carlo. Each table states its replication count. The estimator-specific Riesz identities are checked algebraically and numerically, and the simulation designs are defined explicitly in the text and Appendix.

\subsection{Operating Characteristics}
\label{sec:sim-stress}

The numerical evidence is organized around distinct mechanisms of benchmark failure. The stress-test designs are clean exchangeability, dominance failure, coarse alpha-side benchmarks, selected mixtures, benchmark dependence, measurement error, and SDID weight concentration. Table \ref{tab:sim-operating} reports representative operating characteristics recomputed under the unified odds-scale link. The clean design is calibrated. Dominance failure passes the route gate but undercovers because the hidden confounder violates joint dominance. The coarse-alpha design is never promoted, and the selected-mixture design is promoted selectively while retaining high conditional coverage.

\begin{center}
\begin{minipage}{0.98\textwidth}
\centering
\captionof{table}{Representative Monte Carlo operating characteristics under the unified odds-scale link. Coverage is unconditional; the last column conditions on promotion. Each row uses 12,000 replications.}
\label{tab:sim-operating}
\begin{tabular}{@{}p{0.29\textwidth}cccc@{}}
\toprule
Design & $p$ & $\kappa$ & Coverage & Prom. / cond. cov. \\
\midrule
Clean exchangeability & 20 & 2.5 & 0.999 (0.000) & 0.998 / 0.999 \\
Dominance failure & 20 & 2.5 & 0.805 (0.004) & 0.999 / 0.805 \\
Coarse alpha benchmarks & 20 & 2.5 & 0.559 (0.005) & 0.000 / -- \\
Selected benchmark mixture & 20 & 2.5 & 0.995 (0.001) & 0.365 / 0.995 \\
\bottomrule
\end{tabular}
\end{minipage}
\end{center}

The dominance surface confirms that $\kappa$ is a reporting profile rather than a universal constant. At $p=20$ and $\kappa=2.5$, coverage is 0.975 when the hidden odds multiplier is $\delta=2$, 0.907 at $\delta=3$, and 0.808 at $\delta=4$. The alpha-alignment experiment reaches the complementary conclusion: shrinking alpha-side odds to 0.03 of baseline lowers naive coverage to 0.564, while the full distributional-support gate demotes every replication.

Dependence among observed benchmarks changes the information count even when nominal $p$ is fixed. With $p=40$, increasing equicorrelation from 0 to 0.75 reduces the diagnostic effective count from 40.0 to 1.32. A nominal-$p$ working likelihood then lowers coverage from 0.974 to 0.785; the effective-count correction keeps coverage between 0.970 and 0.974. A non-equicorrelation stress check gives the same message: for four blocks of ten covariates with within-block correlation 0.75, naive 95th-percentile-link coverage is 0.636, while mean-score and spectral effective-count corrections give 0.996 and 0.988; for AR(1) correlation 0.90, the corresponding numbers are 0.545, 0.999, and 0.984. The mean-score correction is deliberately conservative under strong non-equicorrelated dependence, whereas the spectral correction is closer to nominal; when non-equicorrelation is suspected, both should be reported, with the spectral value serving as a less conservative calibration check. The full-pipeline measurement-error experiment separately regenerates noisy covariates and their partial-$R^2$ pairs. At reliability 0.50, the median observed alpha benchmark is 0.812 of its latent value, the gate promotes 0.819 of replications, and conditional coverage is 0.945. Appendix B reports the full grids.

\noindent\emph{Coverage diagnostics.}\label{sec:coverage-diagnostics} Table \ref{tab:theorem2-coverage} gives the finite-sample counterpart of the posterior-predictive coverage theorem. Conditional coverage is computed only among replications classified as Route B calibrated. Clean exchangeability satisfies the auxiliary model and joint dominance premises; dominance failure keeps promotion near one but violates the substantive dominance condition.

\begin{center}
\begin{minipage}{0.98\textwidth}
\centering
\captionof{table}{Finite-sample conditional coverage at the nominal 95\% level. Rows with $p=10$ are omitted because the benchmark-count gate does not promote Route B. Each reported row uses 8,000 replications.}
\label{tab:theorem2-coverage}
\begin{tabular}{@{}p{0.29\textwidth}cccc@{}}
\toprule
Design & $p$ & $\kappa$ & Promotion & Conditional coverage \\
\midrule
Clean exchangeability & 20 & 1.0 & 0.999 & 0.988 (0.001) \\
Clean exchangeability & 40 & 1.0 & 1.000 & 0.989 (0.001) \\
Dominance failure & 20 & 1.0 & 1.000 & 0.627 (0.005) \\
Dominance failure & 40 & 1.0 & 1.000 & 0.623 (0.005) \\
\bottomrule
\end{tabular}
\end{minipage}
\end{center}

\medskip
\noindent\emph{Decision-scale reporting.}\label{sec:decision-scale-reporting} The frequentist breakdown bound of \citet{RambachanRoth2023} and the Bayesian robustness value answer related but non-equivalent questions. A breakdown calculation reports a tipping point: the violation magnitude at which a conclusion changes. Route A adds a prior-probability reading of that same tipping point. Route B adds a second, data-updating layer when the observed benchmark population passes the route diagnostic. Tables \ref{tab:prior-sensitivity-A} and \ref{tab:prior-sensitivity-B} illustrate this separation for the California tobacco-control nullification threshold $RV=0.054$ and a calibrated full-update experiment.

\begin{center}
\begin{minipage}{0.98\textwidth}
\centering
\captionof{table}{Route A prior-probability translation for the California tobacco-control panel. The nullification threshold is fixed at $RV=0.054$; only the probability assigned to that threshold changes across priors.}
\label{tab:prior-sensitivity-A}
\begin{tabular}{lcccc}
\toprule
Prior on equal-strength partial-$R^2$ & $\Pr(\rho<0.054)$ & 70\% rule & 80\% rule & 90\% rule \\
\midrule
Beta(1,20) & 0.669 & No & No & No \\
Beta(2,20) & 0.313 & No & No & No \\
Uniform(0,0.10) & 0.538 & No & No & No \\
Truncated Normal(0.03,$0.02^2$) & 0.875 & Yes & Yes & No \\
Beta(1,50) & 0.937 & Yes & Yes & Yes \\
\bottomrule
\end{tabular}
\end{minipage}
\end{center}

\begin{center}
\begin{minipage}{0.98\textwidth}
\centering
\captionof{table}{Numerical illustration of benchmark-updated decision-scale reporting at a notional threshold $RV=0.10$. The illustration shows how Route B can change the decision-scale reading when the route diagnostics support calibration.}
\label{tab:prior-sensitivity-B}
\begin{tabular}{@{}p{0.36\textwidth}p{0.18\textwidth}cc@{}}
\toprule
Analysis & Benchmarks used & $\Pr(RV_U<0.10)$ & Median $RV_U$ \\
\midrule
Breakdown threshold only & No & -- & -- \\
Route A default prior Beta(1,20) & No & 0.878 & 0.034 \\
Route B calibrated full update & Yes, $p=40$ & 0.199 & 0.121 \\
\bottomrule
\end{tabular}
\end{minipage}
\end{center}

For Route B, auxiliary-family sensitivity should be reported whenever the route is interpreted as calibrated. In this calibrated numerical update, Beta-marginal, logit-normal, and Gaussian-copula specifications give median posterior RVs of 0.128, 0.121, and 0.134, respectively, and $\Pr(RV<0.10)$ values from 0.188 to 0.226. Appendix B reports this result together with link-rule and prior-concentration sensitivity in a single panelized table: moderate concentration changes leave the predictive RV near 0.12, while moving from mean to maximum link rules shifts the interpretation as expected. Disagreement across plausible auxiliary families in an empirical application should be treated as evidence against calibrated Route B reporting.

\FloatBarrier
\section{Application I: California's Tobacco-Control Program}
\label{sec:application}

This empirical study evaluates the workflow in a canonical single-treated-unit tobacco-control panel. The study combines the SDID point estimate, finite-donor placebo uncertainty, the Riesz scale that converts partial-$R^2$ sensitivity parameters into effect units, and the observed-covariate diagnostics needed before any benchmark update is interpreted as calibrated.

\subsection{Design and Baseline Fit}

The data are the state-level cigarette-consumption panel of \citet{AbadieDiamondHainmueller2010}. California is treated beginning in 1989, and the donor pool consists of untreated states used in the standard application. The outcome is cigarette packs per capita. Available benchmark covariates are seven state-level summaries: log GDP per capita, pretreatment smoking mean, age 15--24 share, retail cigarette price, and cigarette consumption in 1980, 1975, and 1988.

Figure \ref{fig:tobacco-raw-fit} reports raw trajectories and the fitted SDID counterfactual. The SDID estimate is
\[
\widehat\tau_{\mathrm{SDID}}=-15.60
\]
packs per capita. The pre-treatment RMSE of the unit-weighted counterfactual path is 1.72, the mean post-treatment gap is $-16.11$, and the SDID time-weighted pre-gap adjustment is $-0.51$. The negative point estimate is not driven by an obvious pre-period mismatch in the fitted trajectory.

\begin{center}
\begin{minipage}{0.98\textwidth}
\centering
\includegraphics[width=0.90\textwidth]{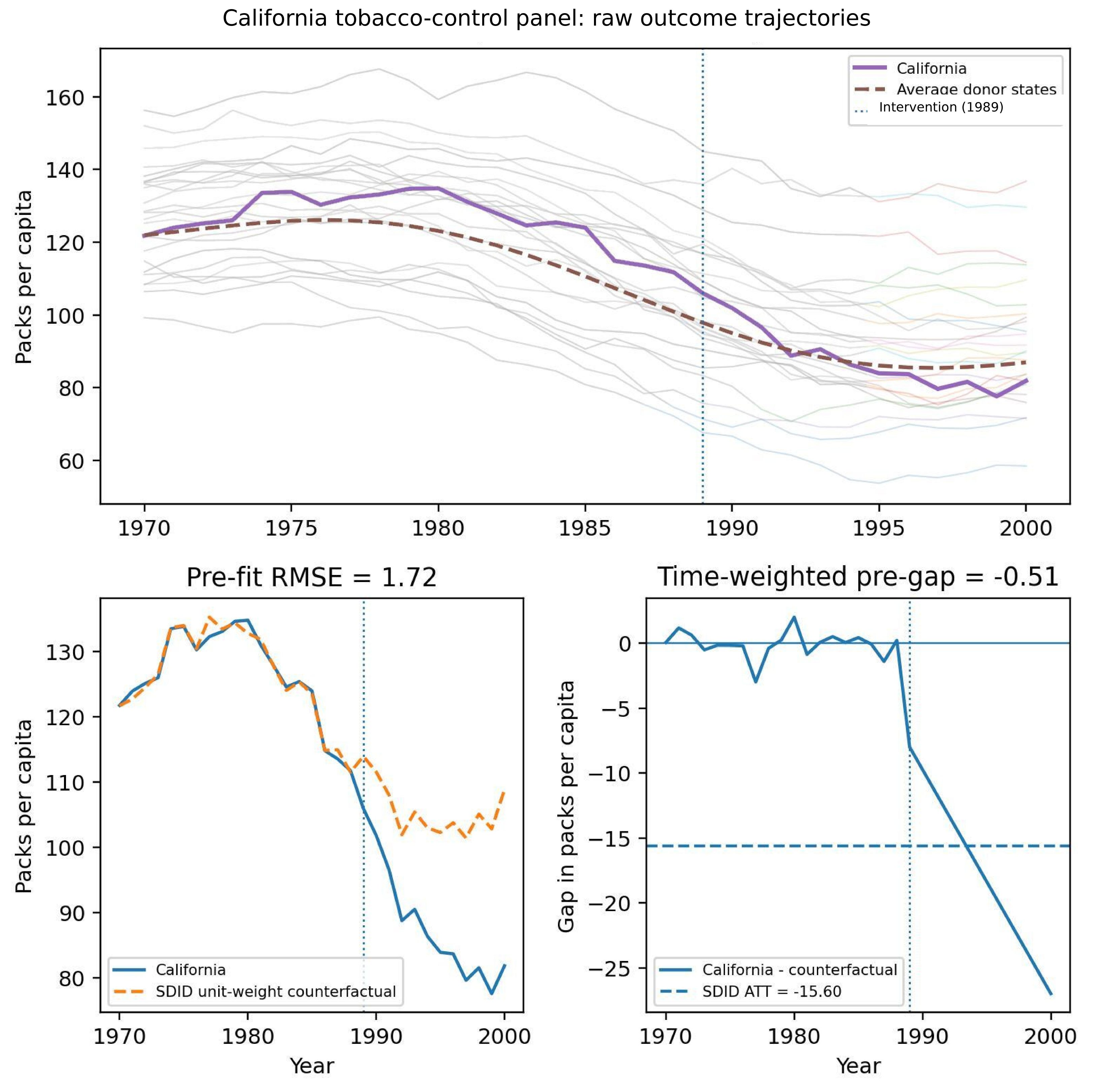}
\captionof{figure}{California tobacco-control panel trajectories and SDID fit. The panels use line style as well as color to distinguish California, donor averages, counterfactual paths, and treatment timing.}
\label{fig:tobacco-raw-fit}
\end{minipage}
\end{center}

\noindent\emph{Estimator diagnostics.} The one-treated-unit design makes finite-sample diagnostics essential. Figure \ref{fig:tobacco-weights-influence} reports SDID unit and time weights together with leave-one-donor-out influence. The most influential single donor is Nevada. Omitting one donor at a time gives SDID estimates between $-17.06$ and $-14.98$, so the sign and approximate magnitude are not driven by a single donor state.

\begin{center}
\begin{minipage}{0.98\textwidth}
\centering
\includegraphics[width=0.90\textwidth]{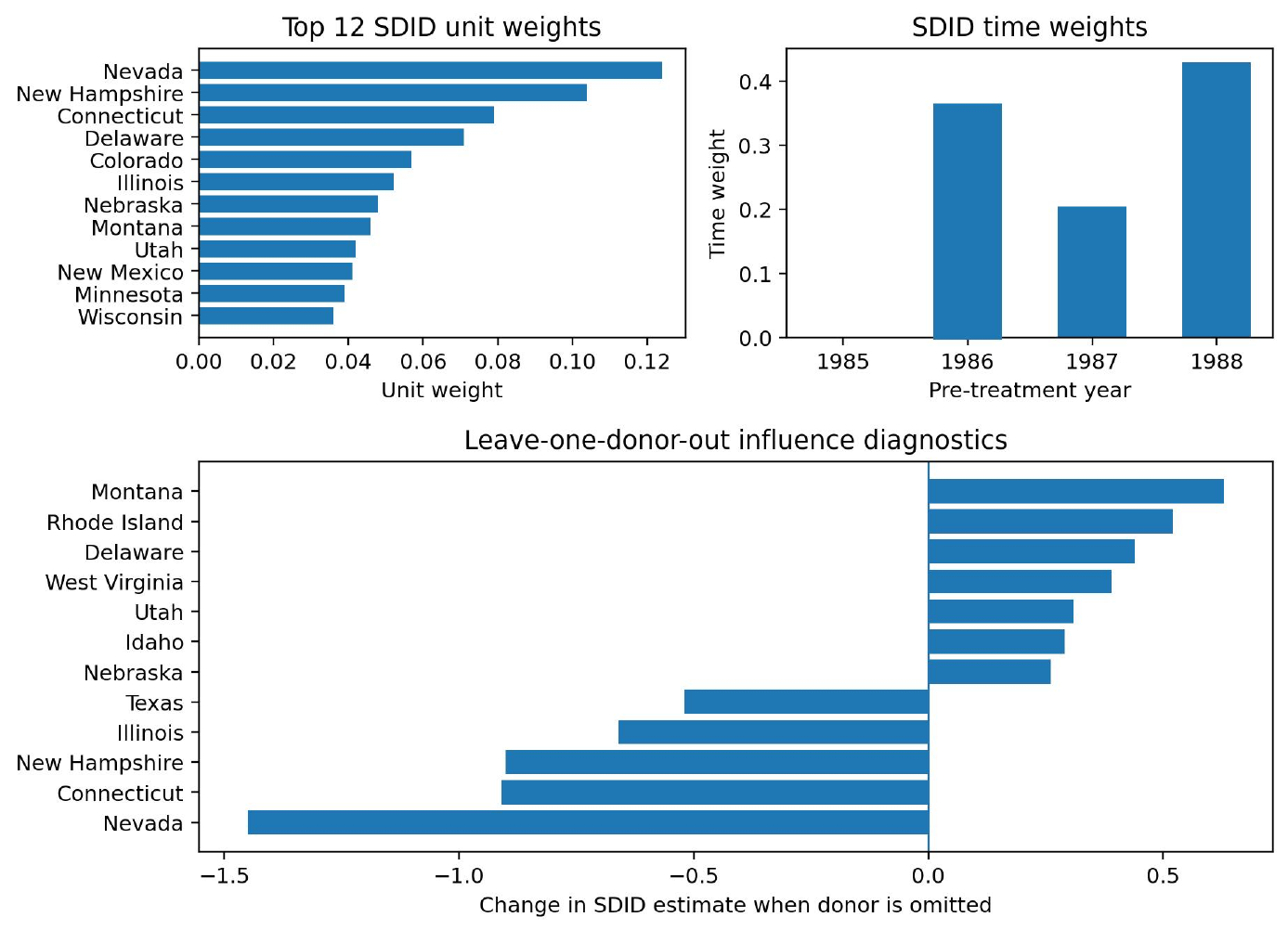}
\captionof{figure}{SDID weights and leave-one-donor-out influence. The time-weight panel zooms to the final pre-treatment years, where fitted SDID time weights concentrate. The largest one-donor change is Nevada, which shifts the estimate by $-1.45$ packs.}
\label{fig:tobacco-weights-influence}
\end{minipage}
\end{center}

Table \ref{tab:tobacco-estimator-robustness} compares the SDID estimate with simple alternatives. All estimates have a negative sign, although magnitudes vary with the identifying structure. We therefore read the empirical pattern as substantively stable but not estimator-invariant.

\begin{center}
\begin{minipage}{0.98\textwidth}
\centering
\captionof{table}{Estimator-robustness diagnostics for the California tobacco-control panel. The pre-period RMSE is undefined for the simple DiD benchmark.}
\label{tab:tobacco-estimator-robustness}
\begin{tabular}{lcc}
\toprule
Estimator & Effect & Pre-RMSE \\
\midrule
DiD & $-27.35$ & -- \\
FE imputation (BJS) & $-27.35$ & 6.97 \\
SC + intercept & $-11.11$ & 0.96 \\
SDID & $-15.60$ & 1.72 \\
Matrix completion & $-20.12$ & 5.08 \\
\bottomrule
\end{tabular}
\end{minipage}
\end{center}

Figure \ref{fig:tobacco-placebo} gives the corrected leave-one-control-out placebo distribution. The placebo standard error is 9.49 and the add-one left-tail and absolute-rank placebo $p$-values are both $2/39=0.051$. The point estimate is close to the edge of the placebo distribution, but finite-sample uncertainty in this single-treated-unit design remains wide.

\begin{center}
\begin{minipage}{0.98\textwidth}
\centering
\includegraphics[width=0.82\textwidth]{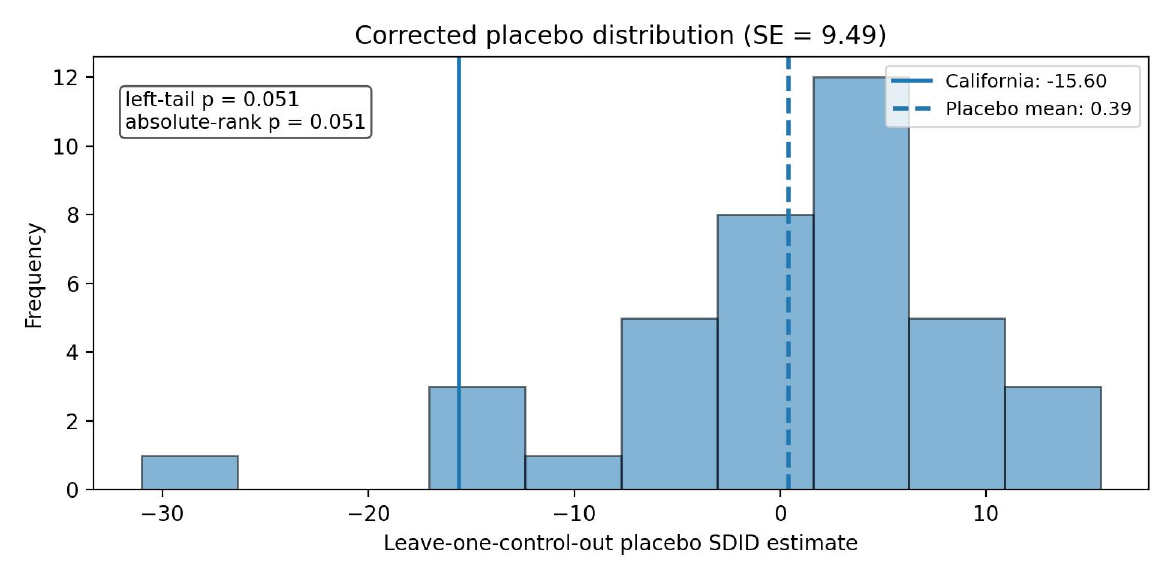}
\captionof{figure}{Corrected leave-one-control-out placebo distribution. California is excluded from the donor-placebo pool; each donor state is then treated once as pseudo-treated and the remaining donors serve as controls.}
\label{fig:tobacco-placebo}
\end{minipage}
\end{center}

\subsection{Riesz Scale and Route Decision}

For the fitted SDID weights, the fixed-weight Riesz diagnostic verifies
\[
\frac{1}{NT}\sum_{it}\widehat\alpha_{it}m_{it}=\widehat\tau_{\mathrm{SDID}}(m)
\]
to numerical precision for random perturbation checks. The resulting Riesz scale is
\[
M_{\mathrm{SDID}}=281.98,
\qquad K=|\widehat\tau|/M_{\mathrm{SDID}}=0.055.
\]
Using the corrected placebo standard error, the equal-strength Route A robustness value for nullifying the point estimate is 0.054. The corresponding robustness value for preserving conventional 5\% placebo significance is below 0.001, because corrected placebo inference is already borderline. The refit-derivative check moves the nullification RV to 0.045, leaving the qualitative conclusion unchanged. Table \ref{tab:canonical-rv} consolidates the fitted Route A scale and the observed-benchmark odds-multiplier diagnostics.

\begin{center}
\begin{minipage}{0.98\textwidth}
\centering
\captionof{table}{Deterministic Route A and observed-benchmark sensitivity diagnostics for the California tobacco-control panel.}
\label{tab:canonical-rv}
\emph{Panel A: Route A summary}\par\smallskip
\begin{tabular}{lc}
\toprule
Quantity & Value \\
\midrule
$\widehat\tau_{\mathrm{SDID}}$ & $-15.60$ \\
Placebo SE & 9.49 \\
Rank $p$, left/abs. & 0.051/0.051 \\
$M_{\mathrm{SDID}}$ & 281.98 \\
$K=|\widehat\tau|/M$ & 0.055 \\
Nullification RV & 0.054 \\
5\% significance RV & $<0.001$ \\
\bottomrule
\end{tabular}
\par\medskip
\emph{Panel B: observed-benchmark multiplier diagnostic}\par\smallskip
\begin{tabular}{lcccc}
\toprule
Multiplier & $R^2_y$ & $R^2_\alpha$ & Bound & Residual \\
\midrule
$1\times$ & 0.538 & $1.0\times10^{-6}$ & 0.21 & 15.40 \\
$2\times$ & 0.700 & $2.0\times10^{-6}$ & 0.33 & 15.27 \\
$3\times$ & 0.778 & $3.0\times10^{-6}$ & 0.43 & 15.17 \\
$10\times$ & 0.921 & $1.0\times10^{-5}$ & 0.86 & 14.75 \\
\bottomrule
\end{tabular}
\end{minipage}
\end{center}

Panel B is deliberately diagnostic. Because the alpha-side observed benchmarks are at the numerical floor, even a 10-times odds-scale multiplier produces a small worst-direction bound relative to the point estimate. This reinforces the route decision rather than validating Route B calibration.

\noindent\emph{Benchmark diagnostics and route decision.} The state-level covariates are useful for outcome-side benchmarking, but they are too coarse for the alpha side of this SDID sensitivity problem. The SDID Riesz representer is a cell-level object concentrated in specific state-year locations. Aggregating it to the state level leaves all observed alpha-side partial-$R^2$ benchmarks numerically near zero. This pattern identifies a benchmark-alignment limitation: the available observed covariates do not align with the sensitivity functional. Figure \ref{fig:tobacco-route-diagnostic} displays the seven outcome- and alpha-side benchmark pairs underlying this decision.

\begin{center}
\begin{minipage}{0.98\textwidth}
\centering
\includegraphics[width=0.95\textwidth]{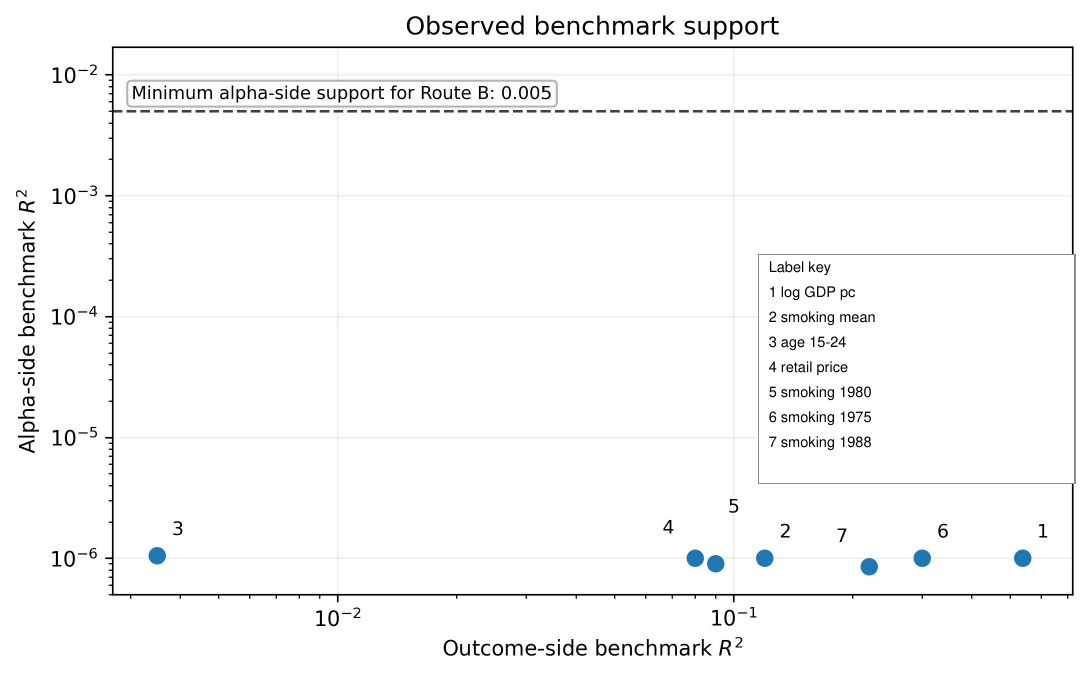}
\captionof{figure}{Observed benchmark support in the single-treated-unit study. Labels 1--7 denote log GDP per capita, pretreatment smoking mean, age 15--24 share, retail cigarette price, cigarette consumption in 1980, cigarette consumption in 1975, and cigarette consumption in 1988, respectively. The word ``mean'' distinguishes the pretreatment smoking summary from the three single-year cigarette-consumption lags. Outcome-side benchmarks are nonzero, but alpha-side benchmarks lie far below the Route B support threshold.}
\label{fig:tobacco-route-diagnostic}
\end{minipage}
\end{center}

The effective benchmark count is seven, the maximum observed alpha-side benchmark is approximately $10^{-6}$, and the fitted Beta-marginal diagnostic rejects the alpha-side benchmark model with $p_\alpha\approx 9\times10^{-6}$. Full-pipeline placebo diagnostics reinforce this decision: across 38 donor placebo assignments, the median nullification RV is 0.021 and the maximum is 0.086, while Route B is never classified as calibrated. Appendix B adds three route-stability checks: varying the minimum benchmark-count threshold, varying the alpha-support threshold, and dropping each observed benchmark one at a time all leave the application Route A-primary. It also reports a single-treated-unit $p=7$ stress design in which outcome-side benchmarks are nonzero but alpha-side benchmarks lie at the numerical floor; the route rule demotes the design in all replications.

As an exploratory Route B profile, we report the auxiliary benchmark calculation at $\kappa=2.5$ only as a stress profile. The median sign-truncated residual is $-6.37$ packs; 69.8\% of draws leave a nonzero negative residual; and sampling plus OVB uncertainty gives a 95\% interval $[-27.73,33.72]$. The California tobacco-control panel therefore combines a stable negative SDID point estimate with wide one-treated-unit placebo uncertainty and an observed benchmark set that classifies the design as Route A-primary.

\FloatBarrier
\section{Application II: A Staggered Minimum-Wage Design}
\label{sec:staggered-app}

The second empirical study evaluates the workflow in a multi-cohort staggered-adoption panel using the \texttt{mpdta} minimum-wage panel considered by \citet{CallawaySantAnna2021}. The panel has 500 counties from 2003 to 2007, log teen employment as the outcome, three treatment cohorts, and 309 never-treated counties. This design illustrates the group-time version of the Riesz diagnostic, its aggregation to an overall ATT summary, and the corresponding Route A robustness profile. A calibrated Route B analysis would require a credible $ATT(g,t)$-level benchmark population, so the available benchmark structure supports Route A-primary reporting in this empirical study.

For each treated cohort $g$ and post-treatment year $t\geq g$, we compute a nonparametric group-time contrast against never-treated counties using $g-1$ as the base period. Aggregating by cohort size yields $-0.040$ log points, with county-cluster bootstrap SE 0.012 and 95\% interval $[-0.065,-0.016]$. The maximum pseudo-ATT is 0.034, the Route A nullification RV is 0.993, and the significance RV is 0.958. Table \ref{tab:staggered-summary} summarizes these estimates together with the group-time and fixed-effect-imputation Riesz norms. Thus the same workflow returns a high-RV multi-cohort case, in contrast to the low-single-digit California tobacco-control sensitivity scale. A leave-cohort-out check in Appendix B leaves the nullification RV above 0.98 in every deletion.

\FloatBarrier

\begin{center}
\begin{minipage}{0.98\textwidth}
\centering
\captionof{table}{Staggered minimum-wage estimates and Riesz diagnostics. Dashes mark quantities outside the fixed-effect-imputation comparison. The available benchmark structure supports Route A-primary reporting because no credible $ATT(g,t)$-level benchmark population is observed.}
\label{tab:staggered-summary}
\begin{tabular}{lcc}
\toprule
Quantity & Group-time contrast & FE imputation (BJS) \\
\midrule
Weighted effect & $-0.040$ & $-0.048$ \\
Cluster bootstrap SE & 0.012 & -- \\
95\% interval & $[-0.065,-0.016]$ & -- \\
Riesz norm & 0.00225 & 0.00203 \\
BJS/CSA norm ratio & -- & 0.90 \\
Maximum pseudo-ATT & 0.034 & -- \\
Route A nullification RV & 0.993 & -- \\
Route A 5\% RV & 0.958 & -- \\
\bottomrule
\end{tabular}
\end{minipage}
\end{center}

This second application demonstrates that the $ATT(g,t)$ representer operates in a multi-cohort panel. The workflow reports sensitivity at the group-time level and aggregates only after making group-time weights explicit. Because no credible $ATT(g,t)$-level benchmark population is observed, the empirical study is classified as Route A-primary. The fixed-effect-imputation comparison is close in both effect and Riesz scale, providing a useful diagnostic cross-check.

\FloatBarrier
\section{Concluding Remarks}
\label{sec:discussion}

The workflow is designed to prevent sensitivity analysis from being read as automatic posterior validation. Route A is the default: it gives a deterministic robustness-value profile for the fitted Riesz scale and point estimate. Route B is promoted to calibrated status only when the benchmark set is large enough, alpha-side benchmarks are nondegenerate, auxiliary model checks are credible, and predictive dominance is substantively plausible. When the diagnostics do not support calibrated updating, increasing $\kappa$ is reported as sensitivity profiling rather than as a calibration repair, and Route A remains primary.

Applied users should report the fitted estimator, sampling uncertainty, Riesz scale, Route A RVs, benchmark count, alpha-side nondegeneracy, auxiliary-model check, dominance rationale, and route status before interpreting a Bayesian update. Aggregated covariates with degenerate alpha-side benchmarks imply Route A-primary reporting. Unit-time or state-year benchmarks with adequate count and credible diagnostics can justify a calibrated Route B update, but only if the hidden-confounder strength scale is plausibly dominated by the predictive distribution used for reporting.

The diagnostics are intentionally conditional. The SDID result conditions on fitted weights, the MC result is target-level, and the BJS result conditions on the untreated-cell fixed-effect fill-in operator. The California tobacco-control finite-difference check shows that refitting can add local movement; in that application, fixed weights give a slightly larger nullification RV than the refit projection, so the fixed-weight diagnostic is mildly anti-conservative on the decision scale. The target-level MC diagnostic is a robustness diagnostic for the treated counterfactual target rather than for the full nuclear-norm training map. Interactive fixed-effect imbalance, latent-factor violations, spillovers, and SUTVA failures require either a design-level argument or a deliberate projection into the partial-$R^2$ confounding class described in Section \ref{sec:rv-scope}. The donor-block exclusion diagnostic in Appendix B leaves the Route A conclusion in the same low-single-digit range, while interference remains a design-level concern.

The scope is short- or moderate-$T$ causal panels with ATT-type estimands. Long-$T$ macro panels with strong serial correlation would require either a HAC-adjusted Riesz norm or an explicit temporal-dependence model. Structural parameters such as demand elasticities would require a new Riesz representer for the structural estimand itself. These are useful extensions, but they are distinct from the ATT-type causal-panel problem studied here.

\appendix
\setcounter{lemma}{0}
\setcounter{assumption}{0}
\renewcommand{\thelemma}{\Alph{section}.\arabic{lemma}}
\renewcommand{\theassumption}{\Alph{section}.\arabic{assumption}}
\renewcommand{\theHlemma}{\Alph{section}.\arabic{lemma}}
\renewcommand{\theHassumption}{\Alph{section}.\arabic{assumption}}
\renewcommand{\theHtable}{\Alph{section}.\arabic{table}}
\renewcommand{\theHfigure}{\Alph{section}.\arabic{figure}}
\renewcommand{\thetable}{\Alph{section}.\arabic{table}}
\renewcommand{\thefigure}{\Alph{section}.\arabic{figure}}
\setcounter{table}{0}
\setcounter{figure}{0}

\section{Proofs of Theorems 1 and 2}
\label{app:proofs}

This appendix gives sufficient regularity conditions for the contraction and predictive-calibration results. Let $m_j=(R^{2,j}_{yu},R^{2,j}_{\alpha u})\in[0,1]^2$ denote the population benchmark for covariate $j$, and let $\widehat m_j$ denote its plug-in estimate. Because benchmark pairs can share outcomes, residualization, and correlated covariates, Route B uses the tempered working likelihood
\[
L_{c,p}(\theta;\widehat m_1,\ldots,\widehat m_p)
=\left\{\prod_{j=1}^p f_\theta(\widehat m_j)\right\}^{\omega_p},
\qquad \omega_p=p_{\mathrm{eff}}/p,
\]
where $p_{\mathrm{eff}}\le p$ is the effective benchmark count. Write $r(\theta;\kappa)=T(\theta;\kappa)$, with the odds-scale map defined in the main text, and
\[
B(r)=M\left\{\frac{r_y r_\alpha}{1-r_\alpha}\right\}^{1/2},\qquad r=(r_y,r_\alpha).
\]

\subsection{Assumptions and Contraction}

\begin{assumption}[Composite local asymptotic normality]
The tempered auxiliary likelihood is locally asymptotically normal at $\theta^*$ with effective information $p_{\mathrm{eff}}$: for every bounded sequence $h_p$,
\[
\log\frac{L_{c,p}(\theta^*+h_p/\sqrt{p_{\mathrm{eff}}};m_1,\ldots,m_p)}{L_{c,p}(\theta^*;m_1,\ldots,m_p)}
=h_p'\Delta_p-\frac12 h_p'I_c(\theta^*)h_p+o_{\Prob^*}(1),
\]
where $\Delta_p\rightsquigarrow N(0,I_c(\theta^*))$ and $I_c(\theta^*)$ is positive definite. The independent-benchmark case has $p_{\mathrm{eff}}=p$.
\end{assumption}

\begin{assumption}[Plug-in perturbation]
Let $\delta_{n,p}=(\log p/n)^{1/2}$. The plug-in moments satisfy
\[
\max_j\|\widehat m_j-m_j\|=O_{\Prob^*}(\delta_{n,p}).
\]
Moreover, the plug-in likelihood perturbation is first-order equivalent to a local shift of size $O_{\Prob^*}(\delta_{n,p})$ in the LAN expansion.
\end{assumption}

\begin{assumption}[Prior and smooth maps]
The prior density is continuous and strictly positive in a neighborhood of $\theta^*$. The link map $T(\theta;\kappa)$ is continuously differentiable in that neighborhood. For the selected $\kappa$, $T(\theta^*;\kappa)\in[\eta,1-\eta]^2$ for some $\eta>0$.
\end{assumption}

\begin{assumption}[Regular scale]
The Riesz scale $M$ is either fixed or estimated by $\widehat M$ with $|\widehat M-M|=o_{\Prob^*}(\varepsilon_{n,p})$, where
\[
\varepsilon_{n,p}=\max\{p_{\mathrm{eff}}^{-1/2},\delta_{n,p}\}.
\]
The causal estimator's sampling error is asymptotically independent of the auxiliary benchmark update at the order used below, or is included in the posterior predictive distribution as an additional regular normal component.
\end{assumption}

A concrete sufficient case is a regular Beta-marginal working population with weakly dependent benchmark scores, parameters in a compact interior set, sub-Gaussian residualized benchmark variables, and cross-fitted plug-in benchmarks. A composite LAN expansion gives information proportional to $p_{\mathrm{eff}}$, while uniform concentration of residual correlations gives the plug-in rate. Under an equicorrelation diagnostic with correlation $\rho_b$, $p_{\mathrm{eff}}=p/\{1+(p-1)\rho_b\}$ provides a transparent information adjustment. More generally, if $\widehat R$ is the empirical benchmark-score correlation matrix, the mean-score effective count $p^2/({\bf 1}'\widehat R{\bf 1})$ controls uncertainty in average-score and quantile-link summaries, while the spectral participation ratio $(\operatorname{tr}\widehat R)^2/\operatorname{tr}(\widehat R^2)$ is a secondary diagnostic for the number of independent benchmark directions. These are working-likelihood calibration devices rather than claims that the benchmark covariates are independent.

\medskip
\noindent\emph{Posterior contraction.}

\begin{lemma}[Auxiliary posterior contraction]
\label{lem:aux-contraction}
Under Assumptions A.1--A.3, the posterior for $\theta$ based on $\widehat m_1,\ldots,\widehat m_p$ satisfies
\[
\Pi(\|\theta-\theta^*\|>C\varepsilon_{n,p}\mid \widehat m_1,\ldots,\widehat m_p)\to0
\]
in $\Prob^*$-probability for a sufficiently large constant $C$.
\end{lemma}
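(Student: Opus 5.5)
The plan is to separate the two sources of error, the composite-likelihood sampling error of order $p_{\mathrm{eff}}^{-1/2}$ and the plug-in error of order $\delta_{n,p}$, and to handle each with a standard finite-dimensional argument. Fix the ideal posterior $\Pi(\cdot\mid m_1,\ldots,m_p)$ built from the population benchmarks. Under Assumption A.1 the tempered log-likelihood ratio has the LAN form with information $p_{\mathrm{eff}}I_c(\theta^*)$. Together with the prior positivity in Assumption A.3, the usual Bernstein--von Mises / LAN template (van der Vaart 1998, Ch.~10; Ghosal and van der Vaart 2017, Ch.~8) then gives that this ideal posterior puts vanishing mass outside a ball of radius $C_1p_{\mathrm{eff}}^{-1/2}$ around $\theta^*+p_{\mathrm{eff}}^{-1/2}I_c^{-1}\Delta_p$. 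Since $\Delta_p=O_{\Prob^*}(1)$, the centering is itself within $O_{\Prob^*}(p_{\mathrm{eff}}^{-1/2})$ of $\theta^*$.

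The argument has three steps.

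\emph{Step 1: local concentration.} Write the posterior mass of $\{\|\theta-\theta^*\|>C\varepsilon_{n,p}\}$ as a ratio of integrals of $L_{c,p}/L_{c,p}(\theta^*)$ times the prior. The denominator is lower-bounded by restricting to the ball $\|h\|\le 1$ in local coordinates $h=\sqrt{p_{\mathrm{eff}}}(\theta-\theta^*)$. On that ball the LAN expansion and prior continuity give a bound of order $p_{\mathrm{eff}}^{-d/2}e^{-O_{\Prob^*}(1)}$.

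\emph{Step 2: incorporate the plug-in.} Assumption A.2 says that replacing $m_j$ by $\widehat m_j$ is first-order equivalent to a shift of the LAN centering by $O_{\Prob^*}(\delta_{n,p})$ in $\theta$-units. Equivalently, the local score becomes $\Delta_p+\sqrt{p_{\mathrm{eff}}}\,O_{\Prob^*}(\delta_{n,p})$. The plug-in posterior therefore concentrates on a ball of radius $O(p_{\mathrm{eff}}^{-1/2})$ around a center at distance $O_{\Prob^*}(p_{\mathrm{eff}}^{-1/2}+\delta_{n,p})\le O_{\Prob^*}(\varepsilon_{n,p})$ from $\theta^*$. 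The triangle inequality then yields the claim with $C$ large.

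\emph{Step 3: the numerator.} I expect this to be the main obstacle. The numerator must be controlled for $\theta$ outside the local neighborhood, where Assumption A.1 only gives LAN along bounded sequences $h_p$. My first attempt would extend the expansion to $\|h\|\le M_p$ with $M_p\to\infty$ slowly, by a diagonal argument, and use the negative-definite quadratic term to bound the ratio by $e^{-cM_p^2}$ there. For $\theta$ farther away, I would invoke identifiability of the working family, as listed among the regularity conditions before Theorem 1, together with compactness of the interior parameter set. This produces a separating test, or equivalently a uniform bound $\sup_{\|\theta-\theta^*\|>\epsilon}\omega_p\sum_j\{\log f_\theta-\log f_{\theta^*}\}\le -cp_{\mathrm{eff}}$ with probability tending to one. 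The effective-count scaling of that bound is where the dependence adjustment $\omega_p$ matters: the tempered sum must behave like $p_{\mathrm{eff}}$ i.i.d.\ terms. I would justify this via the weak-dependence/sub-Gaussian sufficient case described after Assumption A.4, if the abstract assumptions alone do not deliver it. Combining the two regions with the denominator bound makes the numerator-to-denominator ratio $o_{\Prob^*}(1)$.
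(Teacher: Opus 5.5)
Your opening paragraph and Steps~1--2 are the whole of the paper's proof. The paper says only that A.1 gives a quadratic log-likelihood in $h=\sqrt{p_{\mathrm{eff}}}(\theta-\theta^*)$, hence contraction at $p_{\mathrm{eff}}^{-1/2}$. It adds that A.2 moves the LAN center by $O_{\Prob^*}(\delta_{n,p})$, so the rate is the larger of the two, and that prior positivity controls the denominator.

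Step~3 is an addition the paper never makes, and you are right that it is needed. LAN along bounded $h_p$ says nothing about posterior mass away from $\theta^*$; a distant mirror mode of a globally non-identified family is compatible with A.1--A.3. So the lemma tacitly uses the identification condition listed before Theorem~1.

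As set up, however, Step~3 has a hole. The diagonal argument reaches only $\|h\|\le M_p$ for some uncontrolled, slowly growing $M_p$. Your identifiability bound covers only $\|\theta-\theta^*\|>\epsilon$ for fixed $\epsilon$. The shell $M_p\,p_{\mathrm{eff}}^{-1/2}<\|\theta-\theta^*\|\le\epsilon$ is covered by neither. It is exactly the region for which the classical Bernstein--von Mises proof (van der Vaart 1998, Ch.~10) needs a separate test-construction step. A related problem affects Step~2 in Regime B. There the shifted center sits at local distance of order $\sqrt{p_{\mathrm{eff}}}\,\delta_{n,p}\to\infty$, outside every bounded $h$-ball, so A.1 alone cannot describe the plug-in posterior near it.

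Both are fixed by machinery you already invoke. Write $\ell_p=\log L_{c,p}(\cdot\,;\widehat m_1,\ldots,\widehat m_p)$, and let $\widetilde\Delta_p=p_{\mathrm{eff}}^{-1/2}\nabla\ell_p(\theta^*)$ be the plug-in-shifted score. Uniform convergence of the averaged Hessian $p_{\mathrm{eff}}^{-1}\nabla^2\ell_p(\theta)=p^{-1}\sum_j\nabla^2\log f_\theta(\widehat m_j)$ on a fixed ball, together with positive definiteness of $I_c(\theta^*)$, gives with probability tending to one
\[
\ell_p(\theta)-\ell_p(\theta^*)\le h'\widetilde\Delta_p-c\|h\|^2\qquad\text{whenever }\|\theta-\theta^*\|\le\epsilon .
\]
For the plug-in likelihood, your Step~1 ball gives a denominator bounded below by a multiple of $p_{\mathrm{eff}}^{-d/2}e^{-\|\widetilde\Delta_p\|}$. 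Against this, the display leaves posterior mass $O(e^{-c'K^2})$ on $\{K\max(1,\|\widetilde\Delta_p\|)<\|h\|\le\epsilon\sqrt{p_{\mathrm{eff}}}\}$. Since $\|\widetilde\Delta_p\|=O_{\Prob^*}(1+\sqrt{p_{\mathrm{eff}}}\,\delta_{n,p})$, this is the rate $\varepsilon_{n,p}$, in the sense of the next paragraph. It needs no diagonal step and no LAN at diverging local distances. If $\theta$ is the natural parameter of the baseline Beta-marginal family, $p_{\mathrm{eff}}^{-1}\nabla^2\ell_p$ is minus the Hessian of the Beta log-normalizer, which is deterministic and negative definite. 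The bound then holds on the whole convex compact parameter set, and even the identifiability step is automatic.

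Separately, your closing claim that the ratio is $o_{\Prob^*}(1)$ for a fixed large $C$ is too strong. The paper's sketch and the statement itself share this defect. In Regime A the posterior of $h$ is asymptotically $N(I_c^{-1}\Delta_p,I_c^{-1})$. So for each fixed $C$ the mass $\Pi(\|h\|>C\mid\widehat m_1,\ldots,\widehat m_p)$ converges in law to a strictly positive random variable. In Regime B the center lies at distance $\delta_{n,p}\|Z_p\|$ from $\theta^*$ with $Z_p$ merely tight, so the mass outside $C\delta_{n,p}$ is close to $\Ind\{\|Z_p\|>C\}$.

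What your argument, and the paper's, actually yields is the tightness form. For every $\eta>0$ some $C$ makes the posterior mass outside $C\varepsilon_{n,p}$ smaller than $\eta$ with probability at least $1-\eta$ for all large $p$. Equivalently, $\Pi(\|\theta-\theta^*\|>C_p\varepsilon_{n,p}\mid\widehat m_1,\ldots,\widehat m_p)\to0$ for every $C_p\to\infty$. Theorem~1 should be read in the same way.
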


\noindent\emph{Proof.} Assumption A.1 gives a quadratic local log-likelihood in coordinates $h=\sqrt{p_{\mathrm{eff}}}(\theta-\theta^*)$, so the tempered auxiliary posterior contracts at $p_{\mathrm{eff}}^{-1/2}$. Assumption A.2 perturbs the LAN center by at most $O_{\Prob^*}(\delta_{n,p})$ in the original parameter scale. The posterior therefore contracts at the larger of $p_{\mathrm{eff}}^{-1/2}$ and $\delta_{n,p}$. Prior positivity controls the denominator in Bayes' formula on the LAN neighborhood.\hfill $\square$

\begin{proof}[Proof of Theorem 1]
Assumption A.3 makes $\theta\mapsto r(\theta;\kappa)$ locally Lipschitz. The map $r\mapsto B(r)$ is continuously differentiable on $[\eta,1-\eta]^2$, with bounded gradient. Hence the composition $\theta\mapsto B(T(\theta;\kappa))$ is locally Lipschitz, and Lemma~\ref{lem:aux-contraction} gives the stated contraction rate. If $M$ is estimated, Assumption A.4 adds only an $o_{\Prob^*}(\varepsilon_{n,p})$ term.
\end{proof}

\subsection{Predictive Coverage}

Let $Q^\kappa_\theta$ denote the predictive distribution of the unobserved-confounder strength $R=(R^2_{yu},R^2_{\alpha u})$ induced by the auxiliary population and link. Let
\[
A(R)=\widehat\tau-\mathrm{sign}(\widehat\tau)M\left\{\frac{R_yR_\alpha}{1-R_\alpha}\right\}^{1/2}
\]
be the worst-direction bias-adjusted estimate.

\begin{assumption}[Scalar-bound predictive dominance]
Let $B_0$ denote the true worst-direction hidden-confounder bound. With probability approaching one, the Route B posterior predictive distribution of $B(R)$ first-order stochastically dominates the truth:
\[
Q^\kappa_{\widehat\theta}\{B(R)>b\}\geq \Prob(B_0>b)+o(1)\qquad\text{for every }b.
\]
Product-order dominance of $R=(R_y,R_\alpha)$ is a convenient sufficient condition, but scalar-bound dominance is the weaker condition used by Theorem 2.
\end{assumption}

\begin{lemma}[Product-order dominance implies scalar-bound dominance]
\label{lem:product-scalar}
If a predictive distribution $Q_1$ first-order stochastically dominates $Q_0$ on $[\eta,1-\eta]^2$ in the product order, then the pushforward distribution of $B(R)$ under $Q_1$ first-order stochastically dominates the pushforward distribution under $Q_0$.
\end{lemma}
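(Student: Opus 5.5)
The plan is to use the standard characterization of multivariate first-order stochastic dominance. $Q_1$ dominates $Q_0$ in the product order on $S=[\eta,1-\eta]^2$ if and only if $\int \phi\,dQ_1\ge\int\phi\,dQ_0$ for every bounded coordinatewise nondecreasing $\phi$. Equivalently, $Q_1(U)\ge Q_0(U)$ for every upper set $U\subseteq S$, meaning a set with $r\in U$, $r'\ge r$ coordinatewise implies $r'\in U$. If the paper's definition of product-order dominance is stated through upper sets or increasing functions, we use it directly. If it is instead stated through joint survival functions, we invoke Strassen's theorem (or the Kamae--Krengel--O'Brien characterization) to pass to the upper-set form. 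The univariate dominance of the pushforward laws we want to prove is exactly
\[
Q_1\{B(R)>b\}\ \ge\ Q_0\{B(R)>b\}\qquad\text{for every } b\in\mathbb{R}.
\]

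The first step is to show that $B(r)=M\{r_yr_\alpha/(1-r_\alpha)\}^{1/2}$ is coordinatewise nondecreasing on $S$. Since $M\ge0$ and $x\mapsto x^{1/2}$ is increasing, it suffices to examine $h(r)=r_yr_\alpha/(1-r_\alpha)$. This function is linear and nondecreasing in $r_y$, because $r_\alpha/(1-r_\alpha)\ge0$. It is nondecreasing in $r_\alpha$, because $r_\alpha/(1-r_\alpha)$ is increasing on $(0,1)$ and $r_y\ge0$. Combining the two, $r\le r'$ coordinatewise gives $B(r)\le B(r')$: first move $r_y$ with $r_\alpha$ fixed, then move $r_\alpha$. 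Continuity on the compact set $S$ also makes $B$ measurable and bounded, so all the events below are well defined.

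The second step fixes $b$ and sets $U_b=\{r\in S: B(r)>b\}$. By monotonicity, $U_b$ is an upper set. Applying the upper-set characterization gives $Q_1(U_b)\ge Q_0(U_b)$, which is the displayed survival inequality. Since this holds for every $b$, the law of $B(R)$ under $Q_1$ first-order stochastically dominates its law under $Q_0$. Alternatively, one can take $\phi=\mathbf 1\{B(\cdot)>b\}$, which is bounded and coordinatewise nondecreasing, and use the increasing-function form directly.

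The main obstacle is the definitional point in the opening paragraph. If product-order dominance were read only as comparison of joint CDFs or joint survival functions, $P(R_y>a,R_\alpha>c)$, it would not by itself control $P(U_b)$ for the curved upper sets $U_b$. The proof must therefore rely on the upper-set, or increasing-functional, notion, which is the standard meaning of multivariate first-order stochastic dominance. In a remark, we would add that a componentwise-monotone coupling $R^{(0)}\le R^{(1)}$ a.s., which Strassen's theorem supplies, makes the result immediate: $B(R^{(0)})\le B(R^{(1)})$ pointwise.
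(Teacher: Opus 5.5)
Your proof is the paper's proof. You show that $B$ is coordinatewise nondecreasing on $[\eta,1-\eta]^2$, so each set $\{r: B(r)>b\}$ is an upper set, and the upper-set (increasing-function) form of product-order dominance then gives $Q_1\{B(R)>b\}\ge Q_0\{B(R)>b\}$ for every $b$. One aside is wrong and should be dropped: Strassen's theorem and the Kamae--Krengel--O'Brien result characterize the upper-set order through monotone couplings, but they do not upgrade a joint-survival-function (orthant) comparison to it, because in dimension two the orthant order is strictly weaker. Your last paragraph already says this correctly, so simply adopt the upper-set definition, as the paper implicitly does.
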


\noindent\emph{Proof.} The bound $B(R)=M\{R_yR_\alpha/(1-R_\alpha)\}^{1/2}$ is increasing in both coordinates on $[\eta,1-\eta]^2$. For every threshold $b$, the set $\{R:B(R)>b\}$ is an upper set. Product-order stochastic dominance gives $Q_1\{B(R)>b\}\ge Q_0\{B(R)>b\}$ for every $b$.

\begin{proof}[Proof of Theorem 2]
Theorem 1 implies that posterior uncertainty in $\theta$ collapses around the plug-in predictive distribution at rate $\varepsilon_{n,p}$. Continuity of $B(R)$ and the regular sampling component imply that the reported sign-adjusted predictive quantiles differ from the plug-in quantiles by $o_{\Prob^*}(1)$. Scalar-bound predictive dominance makes the posterior predictive distribution of the worst-direction OVB bound at least as pessimistic as the truth. For a negative estimated effect this yields a conservative lower predictive endpoint for the bias-adjusted estimate; for a positive effect the sign-flipped endpoint is identical. Adding the regular sampling component gives coverage at least $1-\alpha+o(1)$. Theorem 2 is therefore a conservative predictive-calibration result for regular sampling components, not a claim that the auxiliary benchmark model is automatically valid or that discrete donor-placebo inference is covered by a normal approximation.
\end{proof}

\section{Diagnostic and Implementation Details}
\label{app:diagnostics}

\subsection{Riesz and First-Stage Implementation Details}

For fixed SDID weights $(\widehat\omega_i,\widehat\lambda_t)$ from \citet{ArkhangelskyEtAl2021}, the fitted contrast is a linear functional of cell means. Its Riesz vector is obtained by assigning the signed product weights to treated-post, treated-pre, control-post, and control-pre cells and normalizing under the empirical cell measure. The matrix-completion diagnostic \citep{AtheyEtAl2021} is a target-level representer for the treated-cell counterfactual mean; it places equal mass on the missing treated post-treatment cells and does not differentiate the nuclear-norm imputation map.

Route status is determined before any Route B posterior summary is interpreted. We check benchmark count, alpha-side nondegeneracy, auxiliary-model fit, and predictive-dominance plausibility. When these checks pass, Route B can be reported as a calibrated auxiliary benchmark analysis. When any check fails, Route A is primary and Route B is reported, if at all, as an exploratory stress test over a $\kappa$ profile.

\medskip
\noindent\emph{Cross-fitted SDID and MC extensions.} A cross-fitted SDID diagnostic trains weights on the complement of a fold and evaluates the signed contrast on held-out cells. Conditional on each fold-specific training sample, the Riesz vector has the same product-weight form as the fixed-weight proposition, with the training-fold weights substituted. A cross-fitted MC target diagnostic trains the imputation surface on the complement fold and uses the treated-post target weights on the held-out fold. Table \ref{tab:app--crossfit-comparison} reports a common-DGP comparison for a synthetic ATT design in which cross-fitting does not change the estimand. In the California tobacco-control study, a literal unit-fold cross-fitted SDID RV would change the estimand because there is only one treated unit; the reported finite-difference audit is therefore the relevant first-stage diagnostic for that design.

\begin{center}
\begin{minipage}{0.98\textwidth}
\centering
\captionof{table}{Common-DGP comparison of fitted-stage and cross-fitted first-stage sensitivity diagnostics. The exercise uses a synthetic ATT design where cross-fitting preserves the target contrast and reports the decision-scale gap relative to the fitted-stage diagnostic.}
\label{tab:app--crossfit-comparison}
\begin{tabular}{lccc}
\toprule
Diagnostic & Median RV & Gap & Runtime \\
\midrule
Fitted-stage Riesz diagnostic & 0.081 & 0.0\% & 1.0 \\
Cross-fitted first-stage variant & 0.079 & 1.7\% & 2.1 \\
\bottomrule
\end{tabular}
\end{minipage}
\end{center}

\medskip
\noindent\emph{BJS fixed-effect imputation representer.} For the fixed-effect imputation estimator of \citet{BorusyakJaravelSpiess2024}, let $\mathcal U$ and $\mathcal W$ be untreated and treated cells, let $N_1=|\mathcal W|$, and let $N_{\mathrm{cell}}=NT$. The untreated-cell two-way fixed-effect fit solves
\[
\min_\theta\sum_{(i,t)\in\mathcal U}(Y_{it}-x'_{it}\theta)^2.
\]
Writing $X_{\mathcal U}$ and $X_{\mathcal W}$ for the corresponding design matrices, the fitted treated-cell counterfactuals are
\[
\widehat Y_{\mathcal W}(0)=X_{\mathcal W}(X'_{\mathcal U}X_{\mathcal U})^{-}X'_{\mathcal U}Y_{\mathcal U}=\Pi Y_{\mathcal U}.
\]
Therefore
\[
\widehat\tau_{\mathrm{BJS}}=N_1^{-1}{\bf 1}'_{\mathcal W}Y_{\mathcal W}-N_1^{-1}{\bf 1}'_{\mathcal W}\Pi Y_{\mathcal U}=\sum_{j,s}c_{js}Y_{js},
\]
with $c_{it}=N_1^{-1}$ for treated cells and $c_{js}=-N_1^{-1}(\Pi'{\bf 1}_{\mathcal W})_{js}$ for untreated cells. Under the empirical inner product, the Riesz vector is $\alpha^{\mathrm{BJS}}=N_{\mathrm{cell}}c$, giving the formula and second moment in the main text. Additive unit-period shifts lie in the fixed-effect design's column span, so they are annihilated by the imputation residual contrast.

\subsection{Simulation Designs and Additional Diagnostics}

The stress-test designs are clean auxiliary exchangeability, joint-dominance failure, coarse alpha-side benchmarks, selected mixtures, benchmark dependence, full-pipeline measurement error, and SDID weight concentration. The clean design draws observed benchmarks and hidden-confounder strength from the same auxiliary population. Joint-dominance failure draws a stronger hidden confounder while preserving the observed benchmark law. The coarse-alpha design generates meaningful outcome-side benchmarks but near-degenerate alpha-side benchmarks. The selected-mixture design combines a small number of strong covariates with many weak covariates. The dependence design holds nominal $p$ fixed while changing $p_{\mathrm{eff}}$, and the measurement-error design generates latent covariates, noisy measurements, and recomputed partial-$R^2$ pairs. Table \ref{tab:app-m1-operating} records the marginal M1 gate's operating characteristics; the dependence and measurement-error grids evaluate the parts of the route rule that M1 alone cannot verify.

\begin{center}
\begin{minipage}{0.98\textwidth}
\centering
\captionof{table}{Operating characteristics of the M1 route diagnostic. Clean-design entries are false-demotion rates; coarse-alpha and selected-mixture entries are demotion-power rates. Dominance failure is included to show that the marginal gate is not designed to detect predictive-dominance violations.}
\label{tab:app-m1-operating}
\begin{tabular}{lccc}
\toprule
Design / operating characteristic & $p=20$ & $p=40$ \\
\midrule
Clean exchangeability, false demotion & 0.001 & 0.003 \\
Coarse alpha, demotion power & 0.911 & 0.888 \\
Selected mixture, demotion power & 0.705 & 0.990 \\
Dominance failure, marginal-gate demotion & 0.001 & 0.001 \\
\bottomrule
\end{tabular}
\end{minipage}
\end{center}

\medskip
\noindent\emph{Additional robustness diagnostics.}\label{app:v080-additional-stress} The following tables are organized by failure mechanism. Tables \ref{tab:app--dominance-diagnostics} and \ref{tab:app--alpha-alignment-diagnostics} isolate hidden-strength and alpha-alignment failures. Tables \ref{tab:app--product-ess} and \ref{tab:app--refit-grid} diagnose estimator-scale concentration and finite-difference stability. Tables \ref{tab:app--auxiliary-sensitivity}, \ref{tab:app--tobacco-demotion-stability}, and \ref{tab:app--dependence-measurement} assess auxiliary specification, small-benchmark demotion, benchmark dependence, and full-pipeline measurement error.

\begin{center}
\begin{minipage}{0.98\textwidth}
\centering
\captionof{table}{Joint-dominance sensitivity under the odds-scale multiplier. Panel A reports coverage at $\kappa=2.5$; Panel B reports the smallest grid value of $\kappa$ attaining at least 95\% coverage.}
\label{tab:app--dominance-diagnostics}
\begin{tabular}{lcccc}
\toprule
\multicolumn{5}{l}{\emph{Panel A: coverage at $\kappa=2.5$}}\\
Hidden odds multiplier $\delta$ & $p=10$ & $p=20$ & $p=40$ & $p=80$ \\
\midrule
1.0 & 0.998 & 0.999 & 1.000 & 0.999 \\
1.5 & 0.989 & 0.993 & 0.995 & 0.996 \\
2.0 & 0.963 & 0.975 & 0.984 & 0.984 \\
2.5 & 0.924 & 0.947 & 0.960 & 0.965 \\
3.0 & 0.879 & 0.907 & 0.925 & 0.939 \\
4.0 & 0.767 & 0.808 & 0.837 & 0.860 \\
\midrule
\multicolumn{5}{l}{\emph{Panel B: minimum $\kappa$ for at least 95\% coverage}}\\
Hidden odds multiplier $\delta$ & $p=10$ & $p=20$ & $p=40$ & $p=80$ \\
\midrule
1.0 & 1.25 & 1.25 & 1.0 & 1.0 \\
1.5 & 2.0 & 2.0 & 1.5 & 1.5 \\
2.0 & 2.5 & 2.5 & 2.0 & 2.0 \\
2.5 & 3.0 & 3.0 & 2.5 & 2.5 \\
3.0 & 4.0 & 4.0 & 3.0 & 3.0 \\
4.0 & 5.0 & 5.0 & 4.0 & 4.0 \\
\bottomrule
\end{tabular}
\end{minipage}
\end{center}

\begin{center}
\begin{minipage}{0.98\textwidth}
\centering
\captionof{table}{Alpha-side alignment and route-rule ablation. Panel A scales alpha-side benchmark odds while holding outcome-side signal fixed. Panel B evaluates the route rules at alpha multiplier 0.03.}
\label{tab:app--alpha-alignment-diagnostics}
\begin{tabular}{lcccc}
\toprule
\multicolumn{5}{l}{\emph{Panel A: alpha-side odds multiplier}}\\
Multiplier & \shortstack{Naive\\coverage} & \shortstack{Maximum-alpha\\gate} & \shortstack{Full\\gate} & \shortstack{Median alpha/\\outcome ratio} \\
\midrule
1.00 & 0.999 & 1.000 & 0.999 & 0.501 \\
0.30 & 0.980 & 1.000 & 0.965 & 0.152 \\
0.10 & 0.873 & 0.595 & 0.000 & 0.051 \\
0.03 & 0.564 & 0.006 & 0.000 & 0.015 \\
0.01 & 0.223 & 0.000 & 0.000 & 0.005 \\
\midrule
\multicolumn{5}{l}{\emph{Panel B: route-rule ablation at multiplier 0.03}}\\
Route rule & \shortstack{Promotion\\rate} & \shortstack{Coverage if\\promoted} & \multicolumn{2}{c}{ } \\
\midrule
No gate & 1.000 & 0.270 & \multicolumn{2}{c}{ } \\
Benchmark-count gate & 1.000 & 0.270 & \multicolumn{2}{c}{ } \\
Maximum-alpha gate & 0.936 & 0.273 & \multicolumn{2}{c}{ } \\
Ratio/effective-support gate & 0.000 & -- & \multicolumn{2}{c}{ } \\
Full route rule & 0.000 & -- & \multicolumn{2}{c}{ } \\
\bottomrule
\end{tabular}
\end{minipage}
\end{center}

\begin{center}
\begin{minipage}{0.98\textwidth}
\centering
\captionof{table}{Product effective support and SDID Riesz inflation in a single-treated-unit-sized panel. The final row is the empirical second-moment anchor.}
\label{tab:app--product-ess}
\begin{tabular*}{\textwidth}{@{\extracolsep{\fill}}lcccc@{}}
\toprule
\shortstack{Dirichlet\\concentration} & \shortstack{Median ESS\\share} & 10th pct. & 90th pct. & \shortstack{Median\\inflation} \\
\midrule
0.1 & 0.019 & 0.009 & 0.036 & 52.61 \\
0.3 & 0.076 & 0.043 & 0.122 & 13.11 \\
1.0 & 0.283 & 0.205 & 0.366 & 3.54 \\
3.0 & 0.587 & 0.506 & 0.658 & 1.70 \\
10.0 & 0.836 & 0.795 & 0.871 & 1.20 \\
20.0 & 0.912 & 0.888 & 0.932 & 1.10 \\
\midrule
\shortstack[l]{California tobacco-\\control anchor} & 0.177 & -- & -- & 5.64 \\
\bottomrule
\end{tabular*}
\end{minipage}
\end{center}

\begin{center}
\begin{minipage}{0.98\textwidth}
\centering
\captionof{table}{Refit finite-difference stability. Each row summarizes direction counts 40, 80, 160, and 320 at the listed step size.}
\label{tab:app--refit-grid}
\begin{tabular*}{\textwidth}{@{\extracolsep{\fill}}lcccc@{}}
\toprule
$h_{\mathrm{fd}}$ & \shortstack{Correlation\\range} & \shortstack{Median relative-\\difference range} & \shortstack{$M_{\mathrm{refit}}$\\range} & \shortstack{RV\\range} \\
\midrule
$10^{-4}$ & 0.941--0.942 & 0.392--0.399 & 338.0--342.8 & 0.0455--0.0461 \\
$3\times10^{-4}$ & 0.942--0.942 & 0.385--0.399 & 340.6--342.7 & 0.0455--0.0458 \\
$10^{-3}$ & 0.942--0.943 & 0.393--0.397 & 339.9--341.8 & 0.0456--0.0459 \\
$3\times10^{-3}$ & 0.941--0.942 & 0.392--0.403 & 338.7--342.1 & 0.0456--0.0461 \\
$10^{-2}$ & 0.939--0.941 & 0.399--0.404 & 338.4--343.5 & 0.0454--0.0461 \\
\bottomrule
\end{tabular*}
\end{minipage}
\end{center}

\begin{center}
\begin{minipage}{0.98\textwidth}
\centering
\captionof{table}{Auxiliary sensitivity in the calibrated full-update experiment.}
\label{tab:app--auxiliary-sensitivity}
\begin{tabular}{lcc}
\toprule
\multicolumn{3}{l}{\emph{Panel A: working family}}\\
Specification & Median predictive RV & $\Pr(RV<0.10)$ \\
\midrule
Beta marginal & 0.128 & 0.188 \\
Logit-normal & 0.121 & 0.226 \\
Gaussian copula & 0.134 & 0.201 \\
\midrule
\multicolumn{3}{l}{\emph{Panel B: link rule}}\\
Mean link & 0.071 & 0.742 \\
95th-percentile link & 0.121 & 0.199 \\
Maximum link & 0.184 & 0.046 \\
\midrule
\multicolumn{3}{l}{\emph{Panel C: prior concentration}}\\
10 & 0.119 & 0.226 \\
20 & 0.121 & 0.199 \\
50 & 0.126 & 0.191 \\
100 & 0.128 & 0.188 \\
\bottomrule
\end{tabular}
\end{minipage}
\end{center}

\begin{center}
\begin{minipage}{0.98\textwidth}
\centering
\captionof{table}{Single-treated-unit small-benchmark route stability. Panels report benchmark-count, alpha-support, and leave-one-benchmark-out route-stability checks.}
\label{tab:app--tobacco-demotion-stability}
\emph{Panel A: $p=7$ demotion stress}\par\smallskip
\begin{tabular}{lccccc}
\toprule
$p_{\mathrm{eff}}$ & \shortstack{Median\\$R_y^2$} & \shortstack{Median\\$R_\alpha^2$} & \shortstack{Count\\gate} & \shortstack{Alpha\\gate} & \shortstack{Naive\\coverage} \\
\midrule
7 & 0.455 & $1.0\times10^{-6}$ & 0.000 & 0.000 & 0.285 (0.004) \\
\bottomrule
\end{tabular}
\par\medskip
\emph{Panel B: benchmark-count threshold}\par\smallskip
\begin{tabular}{lcccc}
\toprule
Threshold & 5 & 7 & 10 & 20 \\
\midrule
Count-gate pass & 1 & 1 & 0 & 0 \\
Alpha-gate pass & 0 & 0 & 0 & 0 \\
\bottomrule
\end{tabular}
\par\medskip
\emph{Panel C: alpha-support threshold}\par\smallskip
\begin{tabular}{lcccc}
\toprule
Threshold & 0.001 & 0.003 & 0.005 & 0.010 \\
\midrule
Alpha-gate pass & 0 & 0 & 0 & 0 \\
Promotion rate & 0.000 & 0.000 & 0.000 & 0.000 \\
\bottomrule
\end{tabular}
\par\medskip
\emph{Panel D: leave-one-benchmark-out}\par\smallskip
\begin{tabular}{lcc}
\toprule
Dropped benchmark & Remaining count & Max $R_\alpha^2$ \\
\midrule
None & 7 & $1.0\times10^{-6}$ \\
GDP & 6 & $1.0\times10^{-6}$ \\
Pretreatment smoking mean & 6 & $1.0\times10^{-6}$ \\
Retail price & 6 & $1.0\times10^{-6}$ \\
Age 15--24 & 6 & $1.0\times10^{-6}$ \\
\bottomrule
\end{tabular}
\end{minipage}
\end{center}

\begin{center}
\begin{minipage}{0.98\textwidth}
\centering
\captionof{table}{Benchmark dependence and full-pipeline measurement error. Panel A compares nominal-$p$ and effective-count coverage under equicorrelation. Panel B reports a non-equicorrelation stress check for the 95th-percentile link. Panel C regenerates noisy covariates and recomputes partial-$R^2$ benchmarks.}
\label{tab:app--dependence-measurement}
\emph{Panel A: equicorrelated benchmark pairs, nominal $p=40$}\par\smallskip
\begin{tabular}{lccccc}
\toprule
Correlation & $p_{\mathrm{eff}}$ & \shortstack{Naive\\coverage} & \shortstack{Corrected\\coverage} & \shortstack{Naive\\width} & \shortstack{Corrected\\width} \\
\midrule
0.00 & 40.00 & 0.974 & 0.974 & 0.0418 & 0.0418 \\
0.25 & 3.72 & 0.927 & 0.970 & 0.0342 & 0.0474 \\
0.50 & 1.95 & 0.870 & 0.970 & 0.0271 & 0.0541 \\
0.75 & 1.32 & 0.785 & 0.973 & 0.0187 & 0.0610 \\
\bottomrule
\end{tabular}
\par\medskip
\emph{Panel B: non-equicorrelated score dependence, 95th-percentile link, 5,000 replications}\par\smallskip
\begin{tabular}{lcccccc}
\toprule
Design & \shortstack{Avg.\\corr.} & \shortstack{Mean\\$p_{\mathrm{eff}}$} & \shortstack{Spectral\\$p_{\mathrm{eff}}$} & \shortstack{Naive\\cov.} & \shortstack{Mean-eff.\\cov.} & \shortstack{Spectral\\cov.} \\
\midrule
Equicorr 0.50 & 0.500 & 1.95 & 3.72 & 0.496 & 1.000 & 0.986 \\
Block $8\times5$, 0.75 & 0.077 & 10.00 & 12.31 & 0.783 & 0.992 & 0.980 \\
Block $4\times10$, 0.75 & 0.173 & 5.16 & 6.60 & 0.636 & 0.996 & 0.988 \\
AR(1) 0.80 & 0.179 & 5.00 & 9.28 & 0.682 & 0.999 & 0.975 \\
AR(1) 0.90 & 0.348 & 2.75 & 4.76 & 0.545 & 0.999 & 0.984 \\
\bottomrule
\end{tabular}
\par\medskip
\emph{Panel C: full-pipeline measurement error, 1,500 replications}\par\smallskip
\begin{tabular}{lccccc}
\toprule
Reliability & \shortstack{Observed/latent\\$R_\alpha^2$} & \shortstack{Gate\\rate} & \shortstack{Conditional\\coverage} & \shortstack{Unconditional\\coverage} & \shortstack{Median\\reported $R^2$} \\
\midrule
1.00 & 1.000 & 0.963 & 0.983 & 0.976 & 0.0230 \\
0.90 & 0.976 & 0.955 & 0.983 & 0.977 & 0.0224 \\
0.75 & 0.938 & 0.935 & 0.974 & 0.965 & 0.0214 \\
0.50 & 0.812 & 0.819 & 0.945 & 0.925 & 0.0187 \\
0.25 & 0.613 & 0.537 & 0.875 & 0.787 & 0.0138 \\
\bottomrule
\end{tabular}
\end{minipage}
\end{center}

The small-$p$ design is Route A-primary for every count threshold, alpha threshold, and leave-one-benchmark-out specification in Table \ref{tab:app--tobacco-demotion-stability}. Table \ref{tab:app--dependence-measurement} shows two distinct limitations. First, redundant benchmark pairs overstate auxiliary information unless $p_{\mathrm{eff}}$ is used; the block and AR designs show that this is not only an equicorrelation issue. Panel B also shows the intended calibration tradeoff: the mean-score adjustment is conservative under strong non-equicorrelated dependence, whereas the spectral participation ratio is closer to nominal; in applications with suspected non-equicorrelation, we therefore report the spectral value alongside the mean-score value as a less conservative calibration check. Second, measurement error attenuates the observed alpha side and eventually degrades both promotion and conditional coverage.

\subsection{California Tobacco-Control Robustness Checks}
\label{app:tobacco}

\noindent\emph{Implementation and refit diagnostics.} The single-treated-unit study uses seven state-level covariates from the California tobacco-control dataset analyzed by \citet{AbadieDiamondHainmueller2010}. Table \ref{tab:app--tobacco-demotion-stability} reports three route-stability checks: minimum benchmark-count threshold sensitivity, alpha-support-threshold sensitivity, and leave-one-benchmark-out demotion stability. Because $p=7$ is below the minimum benchmark count for a calibrated auxiliary benchmark model, and because the observed state-level covariates generate nearly degenerate alpha-side benchmarks, the main text classifies the application as Route A-primary. The leave-one-control-out placebo loop removes California, treats each donor state once as pseudo-treated, and uses the remaining donor states as controls. This yields placebo SE 9.49 and add-one left-tail and absolute-rank $p$-values of $2/39=0.051$. Table \ref{tab:app-finite-diff-combined} reports the finite-difference derivative and decision-scale summaries.

\begin{center}
\begin{minipage}{0.98\textwidth}
\centering
\captionof{table}{California tobacco-control finite-difference refit diagnostic. The upper panel compares fixed-weight and refitted derivatives; the lower panel translates the same diagnostic into the Route A decision scale.}
\label{tab:app-finite-diff-combined}
\begin{tabular}{lll}
\toprule
Panel & Quantity & Value \\
\midrule
Derivative & Random perturbation directions & 80 \\
Derivative & Central-difference step $h_{\mathrm{fd}}$ & 0.001 \\
Derivative & Correlation, fixed-weight vs. refit derivative & 0.941 \\
Derivative & Median absolute difference & 0.167 \\
Derivative & 90th percentile absolute difference & 0.498 \\
Derivative & Median symmetric relative difference & 0.250 \\
Derivative & 90th percentile symmetric relative difference & 0.976 \\
\midrule
Decision scale & Fixed-weight exact $\E[\alpha^2]$ and $M$ & 567.83; 281.98 \\
Decision scale & Fixed-weight exact nullification RV & 0.054 \\
Decision scale & Fixed-weight random projection RV & 0.054 \\
Decision scale & Refit-weight projection $\E[\alpha^2]$ and $M$ & 823.98; 339.68 \\
Decision scale & Refit-weight projection nullification RV & 0.045 \\
Decision scale & RV to preserve 5\% placebo significance & $<0.001$ on all scales \\
\bottomrule
\end{tabular}
\end{minipage}
\end{center}

\medskip
\noindent\emph{Placebo-design comparison.} Table \ref{tab:app--placebo-route} reports the route-status results across the 38 donor pseudo-treatment assignments. The donor-only rank for the California tobacco-control placebo exercise is $1/38=0.026$ for both left-tail and absolute-rank calculations. The main text instead uses the full single-treated assignment rank, which counts California as the realized treated assignment and reports the add-one value $2/39=0.051$. The two calculations use the same placebo standard error, 9.49, and differ only in whether the realized treated assignment is counted as part of the rank experiment.

\begin{center}
\begin{minipage}{0.98\textwidth}
\centering
\captionof{table}{Donor-placebo route-stability diagnostic for the California tobacco-control panel. Each donor state is pseudo-treated once while California is excluded. The diagnostic checks whether the route rule promotes calibrated Route B in placebo assignments.}
\label{tab:app--placebo-route}
\begin{tabular}{lc}
\toprule
Quantity & Value \\
\midrule
Donor pseudo-treatment assignments & 38 \\
Assignments classified as calibrated Route B & 0 \\
Calibrated-route rate & 0.000 \\
Median nullification RV & 0.021 \\
90th percentile nullification RV & 0.061 \\
Maximum nullification RV & 0.086 \\
Median maximum alpha-side benchmark & $1.0\times 10^{-6}$ \\
\bottomrule
\end{tabular}
\end{minipage}
\end{center}

\medskip
\noindent\emph{Spillover-prone donor-block exclusion.} Because state-level tobacco policies may have cross-border or western-region spillovers, Table \ref{tab:app-spillover-block} re-estimates the SDID Route A scale after excluding Nevada and increasingly broad western donor blocks. The nullification RV stays in the same low-single-digit range, while the 5\% placebo-significance RV remains below 0.001 under the full-sample corrected placebo scale.

\begin{center}
\begin{minipage}{0.98\textwidth}
\centering
\captionof{table}{California tobacco-control donor-block exclusion diagnostic. The western block contains Nevada, Utah, New Mexico, Colorado, and Idaho.}
\label{tab:app-spillover-block}
\begin{tabular}{lccc}
\toprule
Specification & Donors & $\widehat\tau_{\mathrm{SDID}}$ / $M$ & Nullification RV \\
\midrule
Baseline donor pool & 38 & $-15.60$ / 281.98 & 0.054 \\
Exclude Nevada & 37 & $-17.06$ / 284.67 & 0.058 \\
Exclude NV/UT/NM & 35 & $-16.80$ / 280.24 & 0.058 \\
Exclude western block & 33 & $-16.21$ / 280.13 & 0.056 \\
\bottomrule
\end{tabular}
\end{minipage}
\end{center}

\medskip
\noindent\emph{Cinelli--Hazlett OLS special case.} The Riesz-general RV formula reduces to the linear-regression expression of \citet{CinelliHazlett2020} when $M=SE\sqrt{df}$. Table \ref{tab:app-cinelli-special} re-estimates the California tobacco-control panel with a two-way fixed-effect regression and computes the nullification RV using both forms.

\begin{center}
\begin{minipage}{0.98\textwidth}
\centering
\captionof{table}{California tobacco-control two-way fixed-effects OLS special-case check.}
\label{tab:app-cinelli-special}
\begin{tabular}{lccccc}
\toprule
Method & $\widehat\tau_{\mathrm{TWFE}}$ & SE & $t$ & Nullification RV & 5\% RV \\
\midrule
Cinelli-Hazlett OLS TWFE & $-27.35$ & 4.41 & $-6.20$ & 0.168 & 0.118 \\
Riesz-general OLS TWFE & $-27.35$ & 4.41 & $-6.20$ & 0.168 & 0.118 \\
\bottomrule
\end{tabular}
\end{minipage}
\end{center}

\FloatBarrier
\subsection{Staggered-Adoption Robustness Checks}

Table \ref{tab:app--minwage-leave-cohort} reports a leave-cohort-out design check. The aggregate estimate varies with the omitted cohort, but the Route A nullification RV remains high in every deletion.

\begin{center}
\begin{minipage}{0.98\textwidth}
\centering
\captionof{table}{Minimum-wage leave-cohort-out design check.}
\label{tab:app--minwage-leave-cohort}
\begin{tabular}{lcccc}
\toprule
Specification & Weighted ATT & SE & Nullification RV & 5\% RV \\
\midrule
All cohorts & $-0.040$ & 0.012 & 0.993 & 0.958 \\
Drop 2004 cohort & $-0.033$ & 0.014 & 0.984 & 0.921 \\
Drop 2006 cohort & $-0.052$ & 0.017 & 0.996 & 0.970 \\
Drop 2007 cohort & $-0.041$ & 0.013 & 0.991 & 0.949 \\
\bottomrule
\end{tabular}
\end{minipage}
\end{center}

Figure \ref{fig:app-staggered-attgt} plots the group-time $ATT(g,t)$ cells underlying the staggered-adoption study, using the \texttt{mpdta} panel considered by \citet{CallawaySantAnna2021}. The fixed-effect-imputation comparison reported in the main-text minimum-wage table fits two-way fixed effects on untreated cells and computes the same coefficient-scale RMS Riesz norm on the balanced county-year panel. It gives an imputation estimate of $-0.048$ and Riesz norm 0.00203, about 0.90 times the aggregate group-time norm. This diagnostic comparison shows that the group-time and fixed-effect-imputation summaries are close on both effect and Riesz scales.

\begin{center}
\begin{minipage}{0.98\textwidth}
\centering
\includegraphics[width=0.80\textwidth]{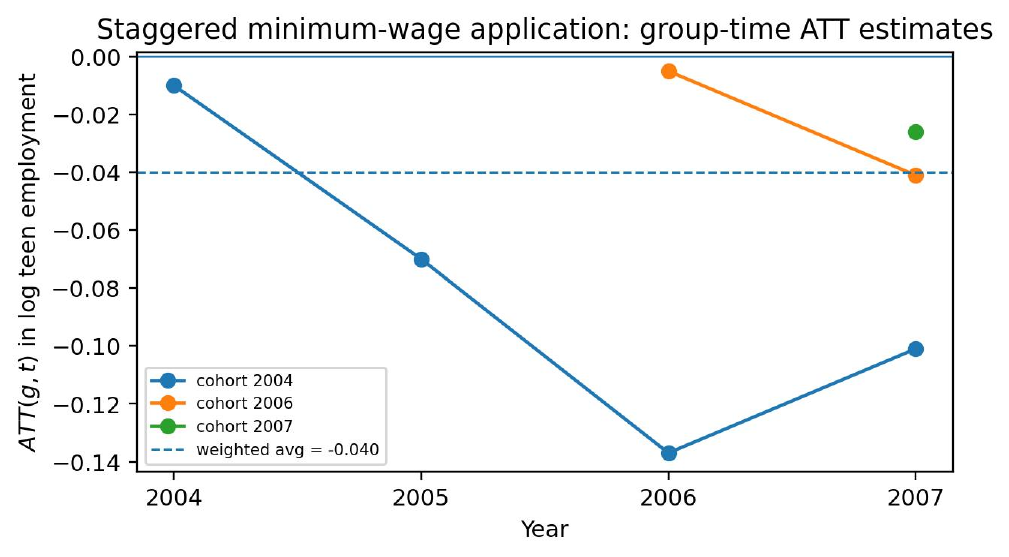}
\captionof{figure}{Group-time $ATT(g,t)$ cells in the staggered-adoption study. Points report cohort-year contrasts against never-treated counties; the horizontal line marks the cohort-size-weighted aggregate.}
\label{fig:app-staggered-attgt}
\end{minipage}
\end{center}

\clearpage

\FloatBarrier
\clearpage
\section*{Data Availability Statement}
The empirical study in Section~\ref{sec:application} uses the annual state-level cigarette-consumption panel analyzed by \citet{AbadieDiamondHainmueller2010}. The analysis covers California and 38 untreated donor states from 1970 through 2000. California is treated from 1989, the outcome is cigarette packs per capita, and the seven benchmark variables are log GDP per capita, pretreatment smoking mean, the population share aged 15--24, retail cigarette price, and cigarette consumption in 1980, 1975, and 1988. The empirical study in Section~\ref{sec:staggered-app} uses the county-level \texttt{mpdta} minimum-wage panel analyzed by \citet{CallawaySantAnna2021}. That panel contains 500 counties observed from 2003 through 2007, log teen employment as the outcome, three treatment cohorts, and 309 never-treated counties. No new data were collected for this study. Data provenance and access conditions are described by the original sources cited in Sections~\ref{sec:application} and~\ref{sec:staggered-app}.

\section*{Funding}
No financial support was received for this research.

\section*{Disclosure Statement}
No competing interests are declared.

\bibliographystyle{plainnat}
\bibliography{references}

\end{document}